\documentclass[12pt]{article}
\usepackage{amsmath,amssymb,amsthm,booktabs}
\usepackage{mathrsfs}
\usepackage{geometry}
\usepackage{pifont}
\usepackage[OT1]{fontenc}
\usepackage[utf8]{inputenc}
\usepackage[colorlinks,citecolor=blue,urlcolor=blue]{hyperref}
\usepackage{txfonts}
\usepackage{indentfirst}
\usepackage{multirow}
\usepackage{float}

\usepackage{ragged2e}  
\usepackage{array}     
\usepackage{geometry}  
\usepackage{tikz}
\usetikzlibrary{arrows.meta, positioning}
\usepackage{graphicx}
\usepackage{subcaption}
\usepackage{algorithmic}
\usepackage{siunitx}    

\usepackage{bm}
\usepackage{euscript}
\usepackage{graphicx}
\usepackage{multicol}
\usepackage[round]{natbib}
\usepackage[ruled]{algorithm2e}
\usepackage{colortbl}

\numberwithin{equation}{section} \theoremstyle{plain}
\newtheorem{theorem}{Theorem}[section]
\newtheorem{lemma}{Lemma}[section]

\newtheorem{proposition}{Proposition}[section]

\newtheorem{definition}{Definition}[section]
\newtheorem{remark}{Remark}[section]

\begin{document}

\newcommand{\gai}[1]{{#1}}


\makeatletter
\def\ps@pprintTitle{%
  \let\@oddhead\@empty
  \let\@evenhead\@empty
  \let\@oddfoot\@empty
  \let\@evenfoot\@oddfoot
}
\makeatother

\newcommand\tabfig[1]{\vskip5mm \centerline{\textsc{Insert #1 around here}}  \vskip5mm}

\vskip2cm

\title{Pareto frontier of portfolio investment under volatility uncertainty and short-sale constraints market}
\author{Jing He\thanks{School of Mathematics, Shandong University, PR China, (hjing@mail.sdu.edu.cn).}
\quad Shuzhen Yang\thanks{Shandong University-Zhong Tai Securities Institute for Financial Studies, Shandong University, PR China, (yangsz@sdu.edu.cn). }
}

\date{}
\maketitle

\begin{abstract}
In this paper, we investigate a portfolio investment problem under volatility uncertainty and short-sale constraints market via sublinear expectation which is used to model volatility uncertainty. We assume the stocks admit volatility uncertainty. Thus the related portfolio has upper variance (maximum risk) and lower variance (minimum risk). 
By introducing a risk factor $w$ to conduct coupled modeling of the maximum and minimum risks, a simplified Sublinear Expectation Mean-Uncertainty Variance (SLE-MUV) model is constructed. 
Theoretically, we show that the Pareto frontier of the SLE-MUV model is a continuous convex curve, and its optimal solution can be expressed as a polynomial analytical expression with respect to the risk factor $w$. 
Empirically, we systematically test the practical performance of the SLE-MUV model and conduct comparative analysis with the traditional Mean-Variance (MV) model as the benchmark based on three sets of samples --- simulated generated data, data of the US stock market and the A-share market. The empirical results show that the SLE-MUV model can significantly improving the risk-adjusted return of the investment portfolio.
\end{abstract}

\noindent KEYWORDS: Mean-Variance model; Distributional uncertainty; Sublinear expectation; Risk factor 
\newpage

\section{Introduction}
Well known that Brownian motion is always used to describe the return of stock under a given probability space. In practical analysis, it is not easy to estimate the variance of the return. Therefore, we consider to introduce the volatility uncertainty in the classical Brownian motion, where the volatility of the Brownian motion takes value in a interval $[\underline{\sigma}t,\overline{\sigma}t]$. In this paper, we will investigate the portfolio investment problem under volatility uncertainty Brownina motion. 

The very first mathematical investment model is Mean-Variance type, which is 
 proposed by \cite{Markowitz1952} optimizes the process of selecting investment portfolios, enabling investors to minimize risk at a given level of expected return or maximize expected return at a given level of risk and has been adopted in the vast majority of existing studies.
\cite{mandelbrot1963} challenged the Gaussian assumption in financial returns by showing that empirical price changes exhibit heavy tails. Such empirical evidence is fundamentally incompatible with the strong premises of the Mean-Variance framework --- namely, normally distributed returns or quadratic preferences. This inconsistency triggered an influential academic controversy regarding the limits of portfolio optimization theory \citep{samuelson1967,feldstein1969, borch1969,samuelson1970}. 
\cite{liu2003mean, yu2008neural,li2010mean} proposed a mean-variance-skewness model for portfolio selection, in which skewness, defined as the third-order moment of returns, is adopted to characterize the asymmetry of return distributions.

Financial markets are vulnerable to non-random shocks, so distributions derived from historical data often fail to fully reflect the actual distribution of future returns. Accordingly, increasing research attention has been paid to financial data with uncertainty. Uncertainty Theory proposed by \cite{liu2013uncertain} is employed to address portfolio optimization problems where historical data are insufficient to estimate the return distributions \citep{zhai2022uncertain,qin2015mean,belabbes2025uncertain}.
\cite{LI2026666} adopted Wasserstein-based distributionally robust optimization to tackle financial market uncertainty, proposing a distributionally robust optimal allocation strategy for financial assets.
\cite{CAI2024310} derived worst-case expressions for distorted risk measures of stopped-loss and bounded loss random variables, identifying distributions that achieve these worst-case values. 
\cite{CAI2025905} obtained closed-form expressions for the worst-case target semi-variance when only the mean and variance of losses are known and extended the mean-variance model to a mean-target semi-variance model. 

Most of data series such as stock returns are generally characterized by leptokurtosis, fat tails and heteroscedasticity, which renders their distribution patterns with notable uncertainty and complexity and makes it difficult to accurately depict them with a single distribution. To address this prevalent practical issue, \cite{peng2007,Peng2010} pioneeringly constructed the framework of the sublinear expectation space $(\Omega,\mathcal{H},\mathbb{E})$ and established a comprehensive theoretical framework that enables us to characterize and analyze distributionally uncertain data. 
This theory has demonstrated significant effectiveness in financial applications. \cite{PengYang2020} proposed a Value-at-Risk (VaR) prediction model based on sublinear expectation theory, namely the G-VaR model, and extensive empirical tests show that this model outperforms most existing benchmark VaR prediction models with excellent predictive performance. \cite{Peng2022} constructed an autoregressive model to calibrate the parameters of the G-normal distribution, and the performance is consistent with the findings of \cite{PengYang2020}.

Given the volatility (distributional) uncertainty of stock return, this paper introduces the sublinear expectation theory into the traditional single-period mean-variance model and derives a risk factor $w$ through theoretical analysis, that is 
$$
\min_{\sum_{i=1}^n\beta_i=1,\ \beta_i\geq 0}\   w\underline{\sigma}^2_{\beta}+(1-w) \overline{\sigma}^2_{\beta},
$$
with a lower bound constraint of the mean, where $\underline{\sigma}^2_{\beta}$ and $\overline{\sigma}^2_{\beta}$ are the minimum and maximum variance of the portfolio.  
Indeed, difficulties arise when estimating the upper and lower covariance matrices of asset returns. We find that this problem can be transformed into solving for the maximum and minimum values of each element in the covariance matrix, subject to the constraint \(\beta_i\geq 0\) for \(i=1,2,\cdots,n\), where the constraint implies a prohibition on short selling risky assets. Further details are provided in Subsection \ref{sub:mv_sle}.
Investors can select this risk factor $w$ according to their own investment preferences and the current market environment, and conduct portfolio allocation by virtue of the mean-variance model under the sublinear expectation. Comprehensive empirical tests are conducted to verify the effectiveness and flexibility of the proposed method. 

Methods of \cite{CAI2024310,CAI2025905}  and the proposed method in this paper are grounded in the core assumption of distribution uncertainty in losses/returns. The key distinction lies in the way uncertainty is modeled:
The distributional uncertainty framework adopted by \cite{CAI2024310,CAI2025905} is based on fixed and known mean and variance of the underlying distribution. By constructing a Wasserstein ambiguity set or reducing the problem to finite-point distributions, they solve for the worst-case risk over all feasible distributions satisfying the moment constraints.
In contrast, the proposed method in this paper characterizes volatility uncertainty using the G-normal distribution, under which the variance is treated as uncertain parameters. Using the $\varphi$-max-mean algorithm, we derive the upper and lower bounds of the variance, thereby enabling the accurate quantification of the maximum and minimum risks. 

The core contributions of this paper are threefold, as outlined below:

\textbf{Theoretical Framework Extension}. We advance the classic Mean-Variance (MV) portfolio model by integrating sublinear expectation theory into its standard formulation. Departing from the conventional single-objective optimization paradigm, we explicitly account for both the upper and lower variances of portfolio returns. 

\textbf{Rigorous Theoretical Foundation}. To resolve the solvability issue of the multi-objective (upper variance and lower variance) optimization problem, we introduce a risk factor $w \in [0,1]$, which converts the multi-objective formulation into a tractable single-objective optimization model, namely the SLE-MUV model. We theoretically demonstrate that the Pareto frontier of the SLE-MUV model is a continuous convex curve and derive the closed-form optimal solution of the proposed model using the active set method.

\textbf{Empirical Validation}.
We conduct extensive empirical analyses to evaluate the performance of the SLE-MUV model across three distinct datasets: synthetic data, six representative U.S. stocks, and six representative Chinese A-shares, with a head-to-head benchmark comparison against the traditional MV model. The empirical results consistently show that the proposed model offers high flexibility.

The structure of this paper is organized as follows:
Section \ref{sec:SLE-MUV model} introduces the fundamental theories of the traditional Mean-Variance (MV) model, and extends it to the framework of sublinear expectation theory to construct the Mean-Variance model under sublinear expectation (SLE-MUV).
Section \ref{sec:Pareto frontier and Optimal Solution} focuses on the Pareto frontier analysis of the SLE-MUV model, and derives its theoretical optimal solution using the active set method.
Subsequently, we validate the performance of the proposed SLE-MUV model using three sets of empirical data: synthetic data, six U.S. stocks, and six A-shares, and compare it with the traditional Mean-Variance model in Section \ref{sec:Simulation}.
Finally, Section \ref{sec:conclusion} summarizes the key findings and main contributions of this paper.

\section{Mean-Variance model}\label{sec:SLE-MUV model}
The traditional single-period Mean-Variance model assumes a frictionless market with no taxes or transaction costs and prohibits short selling. Investors assess portfolio risk via variance or standard deviation and judge performance by expected returns. As rational and risk-averse individuals, they make investment decisions based only on risk and return. Specifically, they choose lower-risk portfolios for equal returns and higher-return portfolios at the same risk level.

Suppose that an investor selects $n$ assets, where the portfolio allocation vector is denoted by $ \bm{\beta} = (\beta_1, \beta_2, \ldots, \beta_n)^T $, $\sum_{i=1}^{n}\beta_i = 1$, $\beta_i \ge 0, i =1,2, \dots ,n$, and the vector of asset returns is given by $ \bm{\mu} = (\mu_1, \mu_2, \ldots,\mu_n)^T$.  
The covariance matrix of asset returns is $V$. The minimizing expected portfolio return is $r_0$ and the maximizing expected portfolio risk is $\sigma_0^2$.  
By balancing the two indicators of maximizing returns or minimizing risks, the classical mean-Variance model is 
\begin{equation}
\begin{array}{ll}
\min & \bm{\beta}^\top V \bm{\beta}, \\
\text{s.t.} & \bm{\beta}^\top \bm{\mu} \geq r_0, \\
& \sum_{i=1}^n \beta_i = 1, \\
& \beta_i \geq 0, i=1,2,\cdots,n,
\end{array}
\quad \text{or} \quad
\begin{array}{ll}
\max & \bm{\beta}^\top \bm{\mu}, \\
\text{s.t.} & \beta^\top V \beta \leq \sigma_0^2, \\
& \sum_{i=1}^n \beta = 1, \\
& \beta_i \geq 0, i=1,2,\cdots,n.
\end{array}
\end{equation}

However, asset returns exhibit significant distributional uncertainty, influenced by a variety of complex factors. 
This remains a persistent challenge to accurately characterize the actual return and risk characteristics of investment portfolios. 

\subsection{Mean-Variance model under SLE}\label{sub:mv_sle}
This study modifies the Mean-Variance model by incorporating Sublinear Expectation theory, 
thereby deriving a portfolio risk interval for investors and facilitating the determination of optimal asset allocation ratios. We now regard any asset which is a financial time-series $\{X_{t}\}_{0\le t\le T}$, governed not by a single distribution but by an infinite family of distributions $\{F_{\theta}\}_{\theta \in \Theta}$, each capturing specific properties of the data \citep{PengYang2020}. 
Consider a relatively simple scenario --- the return of the asset is constant, but the variance varies over time within a interval.

Suppose there are $n$ assets returns $\{X_{t}^{(i)}\}_{0\le t\le T}$, $i = 1,2,...,n$, each following a G-normal distribution $\mathcal{N}(\mu_i,[\underline{\sigma}_i^2,\overline{\sigma}_i^2])$. Let $V$ denote the covariance matrix of these $N$ asset returns. It can be derived that the element $ V_{ij} = \sigma^2_{ij} $ of matrix $V$ lies within the interval $[\underline{\sigma}^2_{ij},\overline{\sigma}^2_{ij}]$ for all $ i, j = 1, 2, \ldots, n $, where
 \begin{align*}
 \overline{\sigma}_{ij}^2 & = \mathbb{E}[(X_i-\mathbb{E}[X_i])(X_j-\mathbb{E}[X_j])] = \mathbb{E}[X_iX_j]-\mu_i\mu_j, \\
   \underline{\sigma}_{ij}^2 & = -\mathbb{E}[-(X_i-\mathbb{E}[X_i])(X_j-\mathbb{E}[X_j])] = -\mathbb{E}[-X_iX_j]-\mu_i\mu_j.
 \end{align*}
Let $\underline{V}$ and $\overline{V}$ be two $n \times n$ matrices, where $\underline{V}_{ij} = \underline{\sigma}^2_{ij}$, $\overline{V}_{ij} = \overline{\sigma}^2_{ij}$.
We further assume that $V$, $\underline{V}$ and $\overline{V}$ are all positive definite matrices.
Our result reduces to the one established in \cite{Pei2025} under the condition $\mathbb{E}\left[X_{t}^{(i)}\right]=0$, $i = 1,2,...,n$.

For each fixed portfolio allocation $\bm{\beta}$ of n assets,
the expected return of the portfolio is $\mu_{\beta} = \bm{\beta}^{\top}\bm{\mu}$,
the variance is given by $\sigma_\beta^2 = \bm{\beta}^T V \bm{\beta} \in [\underline{\sigma}_\beta^2, \overline{\sigma}_\beta^2]$, which lies within the corresponding uncertainty interval owing to the ambiguity in the covariance matrix.  
Since $\beta_i \ge 0, i =1,2, \dots ,n$, we have that
\begin{align}
\overline{\sigma}^2_{\beta} &= \max \bm{\beta}^{\top}V\bm{\beta}  =
\bm{\beta}^{\top}\overline{V}\bm{\beta} =\sum_{i=1}^n\sum_{j=1}^n \beta_i\beta_j\overline{\sigma}^2_{ij},\label{eq:overline_sigma}\\
\underline{\sigma}^2_{\beta}  &=\min \bm{\beta}^{\top}V\bm{\beta}  = \bm{\beta}^{\top}\underline{V}\bm{\beta} = \sum_{i=1}^n\sum_{j=1}^n \beta_i\beta_j\underline{\sigma}^2_{ij}.\label{eq:underline_sigma}
\end{align}

Variance is a commonly used indicator to measure the volatility and risk level of asset returns. 
In the case of portfolio distributional uncertainty, the variance of the portfolio is no longer uniquely determined, but is within an upper and lower bound interval ($\sigma_\beta^2 \in [\underline{\sigma}_\beta^2, \overline{\sigma}_\beta^2]$). $\overline{\sigma}_\beta^2$ and $\underline{\sigma}_{\beta}^2$ represent the maximum and minimum potential risk respectively.
Therefore, the risk of an investment portfolio cannot be precisely quantified by a single value, but can be controlled through upper and lower variance intervals. 

According to modern portfolio theory, rational investors aim to maximize returns while minimizing risks. After incorporating the sublinear expectation theory, regardless of their risk preferences, rational investors should take both upper and lower variances into account to ensure that risks remain within a controllable range. On this basis, a multi-objective mean-variance optimization model based on the sublinear expectation theory can be constructed.

\begin{align}\label{eq:MO-MV eq}
\min& \ \underline{\sigma}^2_{\beta},\ \min \overline{\sigma}^2_{\beta},\\
s.t. & \sum_{i=1}^{n}\beta_i = 1,\nonumber \\
&\   \bm{\beta}^{\top}\bm{\mu} \ge r_0, \nonumber\\
&\  \beta_i \ge 0,i=1,2,\cdots,n.\nonumber
\end{align}

We further assume that the actual variance on a certain day can be approximated by a linear combination of the upper and lower variances, i.e. there exists a risk factor $w \in [0,1]$ satisfying $\hat{\sigma}_{\beta}^2 = w\underline{\sigma}_{\beta}^2 + (1-w)\overline{\sigma}_{\beta}^2$, then the multi-objective model can be transformed into a single objective form Eq. \eqref{eq:SP-MV eq}. 
By using the linear weighted sum method, we can also obtain the following single objective optimization problem from an algorithmic perspective, which is consistent with the actual situation. The subsequent section \ref{sec:Pareto frontier and Optimal Solution} is devoted to analyzing the relationship between the optimal solutions corresponding to Eq. \eqref{eq:MO-MV eq} and \eqref{eq:SP-MV eq}.
\begin{align}\label{eq:SP-MV eq}
   \min&\   w\underline{\sigma}^2_{\beta}+(1-w) \overline{\sigma}^2_{\beta}, \\
   s.t. & \sum_{i=1}^{n}\beta_i = 1,\nonumber \\
   &\   \bm{\beta}^{\top}\bm{\mu} \ge r_0, \nonumber\\
   &\  \beta_i \ge 0, i =1,2,\cdots,n.\nonumber
\end{align}

When the weighted sum approach is employed to address multi-objective optimization problems, the weight (risk factor $w$) assigned to the upper and lower bounds of the variance reflect investor's risk preferences. 
For the resulting single-objective function, the risk factor $w$ captures an investor's confidence in the lower bound of portfolio risk. A larger value of $w$ indicates that the investor is more confident that the portfolio risk is relatively low and the market is more stable. Conversely, a smaller value of $w$ implies that the investor tends to perceive a higher portfolio risk level and a more volatile market environment.
Investors are required to  determine the value of $w$ based on the actual performance of the investment portfolio. In Section \ref{sec:Simulation}, we conduct empirical experiments using simulated data, U.S. stock market data, and A-share market data as samples respectively. Based on the actual return rates and Sharpe ratios of each portfolio, we will conduct specific analysis and discussion on the reasonable value of $w$.

\section{Pareto frontier analysis for SLE-MUV Model}\label{sec:Pareto frontier and Optimal Solution}
In this section, we conduct a Pareto frontier analysis and derive the solution to the mean-variance model under the sublinear expectation theory, so as to further determine the importance of the risk factor \( w \).
\subsection{Pareto frontier Analysis}
In this subsection, we consider the convex multi-objective portfolio optimization model based on upper and lower variance risk measures. 
By drawing on conclusions of \citep{deb2016,He2015,Dylan2022}, we can conduct a Pareto frontier analysis of the mean-variance model under the sublinear expectation framework, thereby providing theoretical support for the optimal decision-making of such models.
\begin{lemma}\label{Le:Lemma1}
For each given $w \in (0,1)$ (or $w \in [0,1] $), the corresponding optimal solution of Eq.  \eqref{eq:MO-MV eq} must be a effective solution (or weakly effective solution) of Eq. \eqref{eq:SP-MV eq}.
\end{lemma}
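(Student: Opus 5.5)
The statement as written has the two references swapped. The meaningful (and standard) claim is the weighted-sum theorem of multi-objective optimization. For $w\in(0,1)$, every optimal solution $\bm{\beta}^*$ of the scalarized problem \eqref{eq:SP-MV eq} is an efficient (Pareto optimal) solution of the bi-objective problem \eqref{eq:MO-MV eq}. For $w\in[0,1]$, it is at least weakly efficient. I would prove it in this form, with the feasible set
$$
\mathcal{B}=\Big\{\bm{\beta}\in\mathbb{R}^n:\ \textstyle\sum_{i=1}^n\beta_i=1,\ \bm{\beta}^\top\bm{\mu}\ge r_0,\ \beta_i\ge 0\Big\},
$$
and objectives $f_1(\bm{\beta})=\bm{\beta}^\top\underline{V}\bm{\beta}$ and $f_2(\bm{\beta})=\bm{\beta}^\top\overline{V}\bm{\beta}$ given by \eqref{eq:underline_sigma} and \eqref{eq:overline_sigma}. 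The argument is by contradiction in both cases, comparing weighted sums. Convexity of $\mathcal{B}$ and positive definiteness of $\underline{V},\overline{V}$ are not needed for this direction. They only guarantee existence and uniqueness of $\bm{\beta}^*$, since $\mathcal{B}$ is compact and, if nonempty, the objective is continuous.

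\textbf{Case $w\in(0,1)$.} Let $\bm{\beta}^*$ minimize $F_w=wf_1+(1-w)f_2$ over $\mathcal{B}$. Suppose $\bm{\beta}^*$ is not efficient. Then some $\bm{\beta}\in\mathcal{B}$ satisfies $f_k(\bm{\beta})\le f_k(\bm{\beta}^*)$ for $k=1,2$, with strict inequality for at least one $k$. Multiply the two inequalities by $w>0$ and $1-w>0$ and add. Because both weights are strictly positive, the strict inequality survives, so $F_w(\bm{\beta})<F_w(\bm{\beta}^*)$. This contradicts optimality.

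\textbf{Case $w\in\{0,1\}$.} Here one weight vanishes, so the argument above can lose strictness; this is the only real subtlety. Use weak efficiency instead. Suppose some $\bm{\beta}\in\mathcal{B}$ has $f_k(\bm{\beta})<f_k(\bm{\beta}^*)$ for both $k$. Then for every $w\in[0,1]$ we get $F_w(\bm{\beta})<F_w(\bm{\beta}^*)$, since the weights are nonnegative and sum to one, so at least one is positive and the corresponding strict inequality carries through. This again contradicts optimality, so $\bm{\beta}^*$ is weakly efficient.

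As a remark, I would add that at $w=0$ or $w=1$ the minimizer of the single remaining strictly convex quadratic is unique. By the first-case argument with the surviving weight, a unique minimizer is in fact efficient. This strengthens the lemma but is not required.

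I expect the main obstacle to be purely expository: reconciling the lemma's wording with the intended direction, and making the roles of $w=0,1$ explicit. The mathematical content is the elementary weighted-sum inequality above.
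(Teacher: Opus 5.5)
Your proof is correct and uses the same weighted-sum contradiction argument as the paper; like the paper's proof, it establishes the intended direction, namely that a minimizer of \eqref{eq:SP-MV eq} is (weakly) efficient for \eqref{eq:MO-MV eq}. Your version is more careful at the key step. The paper only derives the non-strict inequality $w\underline{\sigma}^2(\overline{\bm{\beta}})+(1-w)\overline{\sigma}^2(\overline{\bm{\beta}})\le w\underline{\sigma}^2(\bm{\beta}^*)+(1-w)\overline{\sigma}^2(\bm{\beta}^*)$, which does not contradict optimality, whereas you carry a strict inequality through: for efficiency you use $w,1-w>0$, and for weak efficiency you use strict dominance in both components.
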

\begin{proof}
  Let $\bm{\beta}^*$ is the optimal solution of Eq. \eqref{eq:MO-MV eq}, which is not a effective solution of Eq. \eqref{eq:SP-MV eq}, there exists a $\overline{\bm{\beta}} \in X_{\boldsymbol{\beta}} =\left\{\bm{\beta} \in \mathbb{R}^n| \sum_{i=1}^{n}\beta_i = 1, \sum_{i=1}^{n}\beta_i \mu_i \ge \mu, \beta_i \ge 0\right\}$, s.t.
  \begin{align*}
    \overline{\sigma}^2(\overline{\bm{\beta}}) \le \overline{\sigma}^2(\bm{\beta}^*),\quad
    \underline{\sigma}^2(\overline{\bm{\beta}})\le \underline{\sigma}^2(\bm{\beta}^*). 
  \end{align*}
  Hence we have $w\in (0,1)$,
  \begin{align*}
    w\underline{\sigma}^2(\overline{\bm{\beta}}) +(1-w)\overline{\sigma}^2(\overline{\bm{\beta}}) \le w\underline{\sigma}^2(\bm{\beta}^*) + (1-w)\overline{\sigma}^2(\bm{\beta}^*),
  \end{align*}
  This contradicts $\bm{\beta}^*$ as the optimal solution of Eq. \eqref{eq:MO-MV eq}. The proof for the scenario with $w \in [0,1]$ follows analogously.
\end{proof}
\begin{lemma}\label{Le:Lemma2}
	When $\underline{V}$, $\overline{V}$ are positive definite matrices,
    for any weakly effective solution (or effective solution) $\bm{\beta}^*$ of Eq. \eqref{eq:MO-MV eq}, there must exist a $w \in [0,1]$ (or $w \in (0,1)$) such that $\bm{\beta}^*$ is the optimal solution of Eq. \eqref{eq:SP-MV eq}.
\end{lemma}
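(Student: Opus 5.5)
This is the converse of Lemma \ref{Le:Lemma1}: every (weakly) efficient point of the bi-objective problem \eqref{eq:MO-MV eq} is obtained by a weighted-sum scalarization. The natural route is the standard scalarization theorem for convex multi-objective programs, proved with a separating hyperplane in objective space.

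\textbf{Step 1: convexity.} The feasible set $X_{\bm\beta}=\{\bm\beta:\sum_i\beta_i=1,\ \bm\beta^\top\bm\mu\ge r_0,\ \beta_i\ge0\}$ is a polytope, so it is convex and compact. Because $\underline V$ and $\overline V$ are positive definite, the maps $f_1(\bm\beta)=\bm\beta^\top\underline V\bm\beta$ and $f_2(\bm\beta)=\bm\beta^\top\overline V\bm\beta$ are (strictly) convex on $X_{\bm\beta}$, using \eqref{eq:overline_sigma}–\eqref{eq:underline_sigma}.

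\textbf{Step 2: a convex set in objective space.} Define
$$\mathcal A=\{(y_1,y_2)\in\mathbb R^2:\ \exists\,\bm\beta\in X_{\bm\beta},\ f_1(\bm\beta)\le y_1,\ f_2(\bm\beta)\le y_2\}.$$
By convexity of $f_1,f_2$ and of $X_{\bm\beta}$, $\mathcal A$ is convex. Let $\bm\beta^*$ be weakly efficient and set $y^*=(f_1(\bm\beta^*),f_2(\bm\beta^*))$. Then $\mathcal A$ contains no point $y$ with $y<y^*$ componentwise. So $\mathcal A$ is disjoint from the open convex cone $y^*-\mathrm{int}\,\mathbb R^2_+$.

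\textbf{Step 3: separation.} The separating hyperplane theorem gives $\lambda=(\lambda_1,\lambda_2)\neq0$ with $\lambda^\top y\ge\lambda^\top z$ for all $y\in\mathcal A$ and $z\in y^*-\mathrm{int}\,\mathbb R^2_+$. Letting $z$ run off along the negative axes forces $\lambda\ge0$, so we may normalize $\lambda_1+\lambda_2=1$ and put $w=\lambda_1\in[0,1]$. Letting $z\to y^*$ gives $w f_1(\bm\beta)+(1-w)f_2(\bm\beta)\ge w f_1(\bm\beta^*)+(1-w)f_2(\bm\beta^*)$ for every feasible $\bm\beta$. Hence $\bm\beta^*$ solves \eqref{eq:SP-MV eq} with this $w$.

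\textbf{Step 4: the efficient case, $w\in(0,1)$.} This is the main obstacle. Separation alone may return $w\in\{0,1\}$, and an efficient point need not be properly efficient in general. Positive definiteness resolves it.

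Suppose $w=1$. Then $\bm\beta^*$ minimizes the strictly convex $f_1$ over the convex compact set $X_{\bm\beta}$, so it is the unique minimizer $\bm\beta_1$ of $f_1$. Two sub-cases arise:

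\emph{(a) $\bm\beta_1$ also minimizes $f_2$.} Then $\bm\beta^*$ is optimal for every $w\in[0,1]$, in particular for $w=1/2$.

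\emph{(b) It does not.} Then I need to show that the unique minimizer of $f_1$ is optimal for some interior weight, or else fails to be efficient. Note that the unique minimizer of $f_1$ is automatically efficient, so the second alternative cannot occur and this sub-case is genuinely delicate. Here I would argue with KKT conditions. The scalarized problem is a strictly convex QP with linear constraints, so its solution $\bm\beta(w)$ is continuous in $w$. I would then show that the curve $w\mapsto(f_1,f_2)(\bm\beta(w))$ traces the whole efficient frontier, so $\bm\beta^*$ equals $\bm\beta(w)$ for some $w\in(0,1)$ or coincides with an endpoint.

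I expect the endpoint case to be the real difficulty. If it cannot be closed, the efficient-solution claim may only hold with $w\in[0,1]$, or only for properly efficient points. I would then state it that way, appealing to the cited results in \citep{deb2016,He2015,Dylan2022} for the proper-efficiency version.
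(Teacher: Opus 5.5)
Your Steps 1--3 are correct and follow the paper's route. The paper asserts directly, from convexity of $\underline{\sigma}^2_{\bm{\beta}}$ and $\overline{\sigma}^2_{\bm{\beta}}$, that there are $w_1,w_2\ge 0$, not both zero, with $w_1\underline{\sigma}^2(\bm{\beta})+w_2\overline{\sigma}^2(\bm{\beta})\ge w_1\underline{\sigma}^2(\bm{\beta}^*)+w_2\overline{\sigma}^2(\bm{\beta}^*)$ on $X_{\bm{\beta}}$, and then normalizes $w_1+w_2=1$. Your separation of $\mathcal A$ from $y^*-\mathrm{int}\,\mathbb{R}^2_+$ is exactly the proof of that convex theorem of the alternative. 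The paper's proof ends there and never addresses the parenthetical claim for efficient solutions with $w\in(0,1)$.

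Step 4 is a genuine gap, and it cannot be closed, because that parenthetical claim is false. Your reduction is right: by strict convexity, if separation can only yield $w=1$ (resp.\ $w=0$) for an efficient $\bm{\beta}^*$, then $\bm{\beta}^*$ is the unique minimizer of $\underline{\sigma}^2_{\bm{\beta}}$ (resp.\ $\overline{\sigma}^2_{\bm{\beta}}$). These endpoint points are efficient but need not be optimal for any interior weight. Your continuity argument in sub-case (b) is circular, since $\bm{\beta}^*=\bm{\beta}(1)$ is precisely the endpoint alternative.

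Here is a counterexample. Take $n=2$, $\underline V=I$, $\overline V=\mathrm{diag}(1,2)$ (both positive definite, with $\underline V\le\overline V$ entrywise), and $r_0\le\min_i\mu_i$ so the return constraint is slack. Writing $\bm{\beta}=(t,1-t)$, the minimizer of $w\underline{\sigma}^2_{\bm{\beta}}+(1-w)\overline{\sigma}^2_{\bm{\beta}}$ is $t(w)=(2-w)/(3-w)$, which equals $1/2$ only at $w=1$. Yet $(1/2,1/2)$ is the unique minimizer of $\underline{\sigma}^2_{\bm{\beta}}$ and hence efficient.

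What you can correctly prove is the following. Every efficient solution is optimal for some $w\in[0,1]$, which is immediate from Step 3. Every efficient solution other than the two individual minimizers of $\underline{\sigma}^2_{\bm{\beta}}$ and $\overline{\sigma}^2_{\bm{\beta}}$ is optimal for some $w\in(0,1)$, which your Step 4 reduction already establishes. The general form is Geoffrion's theorem for properly efficient points. Note that the paper's own Remark after the two lemmas concedes that not all efficient solutions are obtained from $w\in(0,1)$, which contradicts the lemma's parenthetical.
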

\begin{proof}
  Let $\bm{\beta}^*$ is a weakly effective solution of Eq. \eqref{eq:MO-MV eq}, there exists $\bm{\beta} \in X_{\boldsymbol{\beta}} =\left\{ \bm{\beta} \in \mathbb{R}^n| \sum_{i=1}^{n}\beta_i = 1 \right.$, $ \left. \sum_{i=1}^{n}\beta_i \mu_i \ge \mu, \beta_i \ge 0\right\}$, s.t.
  \begin{align*}
        \overline{\sigma}^2(\overline{\bm{\beta}}) < \overline{\sigma}^2(\bm{\beta}^*),\quad
    \underline{\sigma}^2(\overline{\bm{\beta}})< \underline{\sigma}^2(\bm{\beta}^*) .
  \end{align*}
  	When  $\underline{V}$, $\overline{V}$ are positive definite matrices, $\underline{\sigma}_{\beta}^2$ and $\overline{\sigma}_{\beta}^2$ are convex functions, there exist two scalars $w_1 ,w_2 \ge 0$  that are not both zero, s.t. for all $\beta \in X_{\boldsymbol{\beta}}$,
  \begin{equation*}
    w_1\left( \underline{\sigma}^2(\bm{\beta}) - \underline{\sigma}^2(\bm{\beta}^*) \right) + w_2 \left( \overline{\sigma}^2(\bm{\beta}) -\overline{\sigma}^2(\bm{\beta}^*) \right) \ge 0,
  \end{equation*}
  i.e.
  \begin{equation*}
    w_1\underline{\sigma}^2(\bm{\beta}) + w_2\overline{\sigma}^2(\bm{\beta}) \ge w_1\underline{\sigma}^2(\bm{\beta}^*) + w_2\overline{\sigma}^2(\bm{\beta}^*).
  \end{equation*}
  Without loss of generality, assume $w_1 + w_2 = 1$, i.e., $w_1 \in [0,1]$ and $w_2 = 1 - w_1$, hence $\bm{\beta}^*$ is the optimal solution of Eq. \eqref{eq:SP-MV eq}.
\end{proof}
\begin{remark}
The above two lemmas indicate that for the convex multi-objective programming problem Eq. \eqref{eq:MO-MV eq}, when $w$ takes all values in $[0,1]$, all weakly efficient solutions of Eq. \eqref{eq:MO-MV eq} can be obtained by solving the optimal solutions of Eq. \eqref{eq:SP-MV eq}. However, when $w$ takes all values in $(0,1)$, although efficient solutions of Eq. \eqref{eq:MO-MV eq} can also be derived via Eq. \eqref{eq:SP-MV eq}, it is not guaranteed that all efficient solutions can be obtained.
\end{remark}

	\begin{proposition}\label{property: smooth convex curve}
		When $\underline{V}$, $\overline{V}$ are positive definite matrices, the Pareto frontier of the convex multi-objective programming problem Eq. \eqref{eq:MO-MV eq} is a continuous convex curve.
	\end{proposition}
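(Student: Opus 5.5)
The plan is to work in the objective plane. Write $f_1(\bm\beta)=\bm\beta^\top\underline V\bm\beta$ and $f_2(\bm\beta)=\bm\beta^\top\overline V\bm\beta$. The feasible set $X_{\bm\beta}$ is a nonempty, compact, convex polytope (assuming $r_0\le\max_i\mu_i$), and $f_1,f_2$ are strictly convex because $\underline V,\overline V$ are positive definite. Set $\mathcal Y=\{(f_1(\bm\beta),f_2(\bm\beta)):\bm\beta\in X_{\bm\beta}\}$, its upper set $\mathcal Y^+=\mathcal Y+\mathbb R^2_{+}$, and let $\mathcal P$ be the image of the (weakly) efficient solutions. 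The proof has three steps.

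\emph{Step 1: convexity of $\mathcal Y^+$.} Take $\bm\beta,\bm\beta'\in X_{\bm\beta}$ and $\lambda\in[0,1]$. By convexity of each $f_k$,
\[
f_k(\lambda\bm\beta+(1-\lambda)\bm\beta')\le\lambda f_k(\bm\beta)+(1-\lambda)f_k(\bm\beta'),
\]
so convex combinations of image points are dominated by image points. Hence $\mathcal Y^+$ is convex. It is also closed, since it is the sum of a compact set and a closed cone.

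\emph{Step 2: parametrize the frontier by $w$.} For each $w\in[0,1]$, the function $g_w=wf_1+(1-w)f_2$ has the positive definite Hessian $2(w\underline V+(1-w)\overline V)$. So $g_w$ is strictly convex and has a unique minimizer $\bm\beta^*(w)$ on $X_{\bm\beta}$. By Lemma \ref{Le:Lemma1} and Lemma \ref{Le:Lemma2}, $\mathcal P=\{F(\bm\beta^*(w)):w\in[0,1]\}$, where $F=(f_1,f_2)$. Continuity of $w\mapsto\bm\beta^*(w)$ follows from Berge's maximum theorem: the constraint set is fixed and compact, $g_w$ is jointly continuous in $(w,\bm\beta)$, and the argmin is single-valued, hence continuous. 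Composing with the continuous map $F$ shows $w\mapsto F(\bm\beta^*(w))$ is continuous. Therefore $\mathcal P$ is the continuous image of an interval, a connected curve.

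\emph{Step 3: show the curve is the graph of a convex function (main obstacle).} First, $\mathcal P$ is the set of minimal points of $\mathcal Y^+$ (weak minimality only adds points sharing an endpoint coordinate). Set $a=\min f_1$ and $b=f_1(\bm\beta^*(1))$. For $t\in[a,b]$ define
\[
\phi(t)=\min\{f_2(\bm\beta):\bm\beta\in X_{\bm\beta},\ f_1(\bm\beta)\le t\}.
\]
This is an epsilon-constraint problem that is feasible for $t\ge a$. I would then prove two facts.

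(i) $\phi$ is convex. For feasible $\bm\beta_t,\bm\beta_s$ with $t,s\in[a,b]$, the convex combination is feasible at level $\lambda t+(1-\lambda)s$ by convexity of $f_1$. Convexity of $f_2$ then gives $\phi(\lambda t+(1-\lambda)s)\le\lambda\phi(t)+(1-\lambda)\phi(s)$.

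(ii) $\phi$ is nonincreasing, and $\mathcal P$ coincides with the graph $\{(t,\phi(t)):t\in[a,b]\}$. The delicate part is showing the constraint $f_1\le t$ is active at the minimizer and that each graph point is efficient. For this I would use strict convexity. If the constraint were inactive for some $t<b$, the minimizer would be the unconstrained minimizer of $f_2$, namely $\bm\beta^*(0)$, which would force $t\ge f_1(\bm\beta^*(0))$. So, fixing the endpoint convention for $a,b$ from $\bm\beta^*(1)$ and $\bm\beta^*(0)$, on the interior the constraint is tight and $\phi$ is strictly decreasing. Uniqueness of minimizers of strictly convex functions then gives efficiency of each graph point.

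A finite convex function on a closed interval is continuous in the interior. Continuity at the endpoints follows from Step 2, or from lower semicontinuity of $\phi$ via compactness combined with convexity. Hence $\mathcal P$ is a continuous convex curve. I expect the endpoint bookkeeping and the identification in (ii) to be where care is needed. The rest is routine convex analysis.
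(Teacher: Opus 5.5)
Your proposal is correct, and its first two steps are the paper's proof. The paper also gets a unique weighted minimizer from strict convexity and continuity of $w\mapsto\bm\beta^*(w)$ from Berge's theorem. It gets convexity from the fact that $\lambda\bm\beta(w_1)+(1-\lambda)\bm\beta(w_2)$ is feasible and its image dominates the chord, which is your Step~1.

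The difference is your Step~3. The paper jumps from ``the front dominates or lies on every chord'' to ``the front is convex.'' That step tacitly requires the front to be the graph of a nonincreasing function of $\underline{\sigma}^2_{\bm\beta}$ over an interval. Your $\epsilon$-constraint value function $\phi$ supplies exactly this, and uniqueness of minimizers identifies the front with the graph of a strictly decreasing convex function on $[a,b]$. This gives ``continuous convex curve'' a precise meaning. It also makes Step~2 (and with it Lemmas~\ref{Le:Lemma1}--\ref{Le:Lemma2} and Berge) dispensable: convexity gives continuity in the interior, and your compactness/lower-semicontinuity argument handles the endpoints. In exchange, the paper's weighted-sum route delivers the parametrization by the risk factor $w$, which it uses in the next subsection.

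Two small repairs are needed.

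\emph{The endpoint $b$.} Since $\bm\beta^*(1)$ minimizes $f_1$, your definition gives $b=f_1(\bm\beta^*(1))=a$. You need $b=f_1(\bm\beta^*(0))$, which is what your inactive-constraint contradiction actually uses. If the constraint is inactive at the $\epsilon$-constraint minimizer $\bm\beta_t$, then $\bm\beta_t$ is a local, hence global, minimizer of $f_2$ on $X_{\bm\beta}$. So $\bm\beta_t=\bm\beta^*(0)$ and $f_1(\bm\beta^*(0))=f_1(\bm\beta_t)<t$, contradicting $t\le b$.

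\emph{The reverse inclusion in (ii).} State it explicitly. An efficient $\bm\beta$ has $f_1(\bm\beta)\le b$, since otherwise $\bm\beta^*(0)$ dominates it. It also satisfies $f_2(\bm\beta)=\phi(f_1(\bm\beta))$, since otherwise $\bm\beta_t$ with $t=f_1(\bm\beta)$ dominates it.
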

\begin{proof}
    It is obvious that the problem given by Eq. \eqref{eq:SP-MV eq} is a convex optimization problem with a strictly convex objective function when $w \in [0,1]$. Therefore, Eq. \eqref{eq:SP-MV eq} has a unique optimal solution, which implies that Eq. \eqref{eq:MO-MV eq} has a unique weakly efficient solution by Lemma \ref{Le:Lemma1} and Lemma \ref{Le:Lemma2}.
    
   Let $f(\bm{\beta},w) = w\underline{\sigma}^2(\bm{\beta})+(1-w)\overline{\sigma}^2(\bm{\beta})$, define optimal value function $V(w) = min_{\bm{\beta}\in F}f(\bm{\beta},w)$, where the feasible set $F$ is compact, $f(\bm{\beta},w)$ is continuous with respect to 
   $(\bm{\beta},w)$ and strictly convex with respect to $\beta$.
   By Berge's Maximum Theorem, the optimal solution set mapping 
   $argmin_{\bm{\beta}\in F}f(\bm{\beta},w)$ is upper hemicontinuous. Since the optimal solution is unique, upper hemicontinuity is equivalent to continuity, $\bm{\beta}^{*}(w)$ is continuous and the parametric representation of the Pareto frontier $\gamma(w) = \left(\underline{\sigma}^2_{\bm{\beta}(w)},\overline{\sigma}^2_{\bm{\beta}(w)}\right)$ is a continuous curve for 
   $w\in[0,1]$.
   
   Let 
   $\gamma(w_1) = \left(\underline{\sigma}^2_{\bm{\beta}(w_1)},\overline{\sigma}^2_{\bm{\beta}(w_1)}\right)$  and 
   $\gamma(w_2) = \left(\underline{\sigma}^2_{\bm{\beta}(w_2)},\overline{\sigma}^2_{\bm{\beta}(w_2)}\right)$  be any two points on the Pareto frontier, corresponding to the optimal solutions $\bm{\beta}(w_1)$ and $\bm{\beta}(w_2)$ in the feasible region, respectively.
   For any $\lambda \in [0,1]$, consider the convex combination in the feasible region:
   $\bm{\beta}_{\lambda} = \lambda\bm{\beta}(w_1) + (1-\lambda)\bm{\beta}(w_2)$.
   Since the feasible region is a convex set,  $\bm{\beta}_\lambda $  remains within the feasible region. Due to convex objective function, we have
   \begin{align*}
   	\underline{\sigma}_{\bm{\beta}_{\lambda}}^2 &\le \lambda\underline{\sigma}^2_{\bm{\beta}(w_1)} + (1-\lambda)\underline{\sigma}^2_{\bm{\beta}(w_2)},\\
   	\overline{\sigma}_{\bm{\beta}_{\lambda}}^2 &\le \lambda\overline{\sigma}^2_{\bm{\beta}(w_1)} + (1-\lambda)\overline{\sigma}^2_{\bm{\beta}(w_2)}.
   \end{align*}
   The Pareto front dominates or lies on $\left(\lambda\underline{\sigma}^2_{\bm{\beta}(w_1)} + (1-\lambda)\underline{\sigma}^2_{\bm{\beta}(w_2)}, \lambda\overline{\sigma}^2_{\bm{\beta}(w_1)} + (1-\lambda)\overline{\sigma}^2_{\bm{\beta}(w_2)}\right)$, implying that the Pareto front is convex.
   
   Hence for all $ w \in [0,1] $, the Pareto frontier of Eq. \eqref{eq:MO-MV eq} is a unique continuous convex curve.
\end{proof}
\begin{figure}
  \centering
  \includegraphics[width=0.6\textwidth]{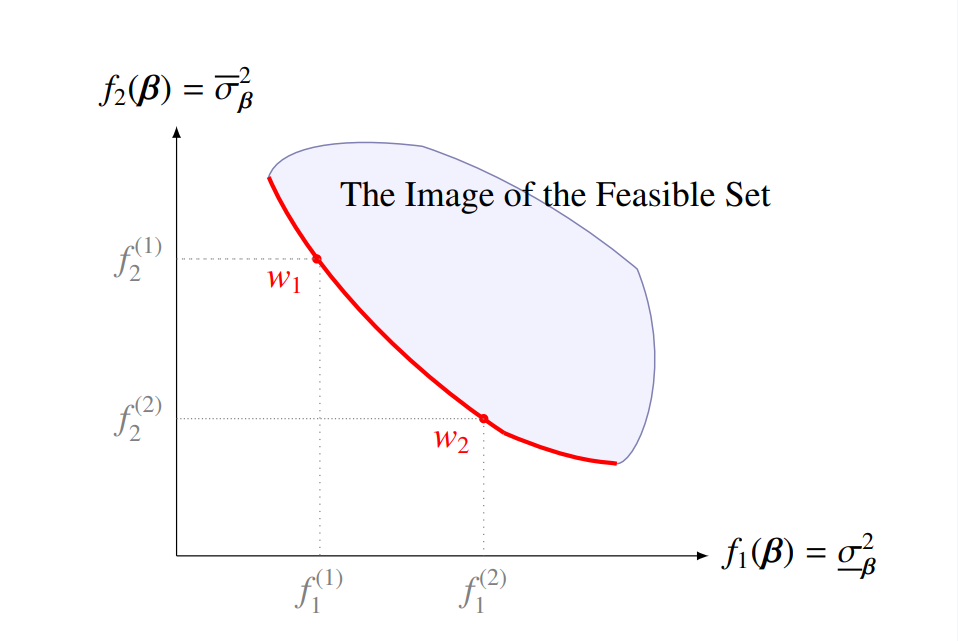}
	\caption{Geometric illustration of a smooth convex Pareto frontier
		between the upper and lower variance objectives.}
	\label{fig:pareto_frontier_smooth}
\end{figure}

In Figure \ref{fig:pareto_frontier_smooth}, the blue region illustrates the image of the feasible set 
$X_{\boldsymbol{\beta}} =\left\{ \bm{\beta} \in \mathbb{R}^n| \sum_{i=1}^{n}\beta_i = 1,\right.$ $\left. \bm{\beta}^{\top}\bm{\mu} \ge r_0, \beta_i \ge 0, i = 1,2,\cdots,n \right\}$ under the objective functions, which is convex. The red curve indicates the Pareto frontier, representing the convex boundary of efficient solutions. The labeled points on the curve correspond to the optimal solutions obtained for different risk factors $w$.

\subsection{Optimal Solution for SLE-MUV Model}\label{sec:The role of weights in portfolio allocation}
In this subsection, we derive the analytical expression for the optimal solution of Eq. \eqref{eq:SP-MV eq} by means of the active set method, and further obtain the unique weakly efficient solution of Eq. \eqref{eq:MO-MV eq}.

Eq. \eqref{eq:SP-MV eq} can be simplified to $\bm{\beta}^{\top}\Sigma \bm{\beta}$, where $\Sigma  = w \underline{V}+(1-w)\overline{V}$. We now construct the Lagrangian function,
\begin{align}\label{eq:lagrangian function}
  \mathcal{L}(\bm{\beta}) = \bm{\beta}^{\top} \Sigma \bm{\beta} + \lambda_1(\mathbf{1}^{\top} \bm{\beta}-1)+\lambda_2(\mu_0- \bm{\beta}^{\top}\bm{\mu})-\bm{\gamma}^{\top} \bm{\beta}.
\end{align}
We have
\begin{equation*}
  \mathbf{1}^{\top} \bm{\beta}=1,\ \bm{\mu}^{\top} \bm{\beta}\ge \mu_0,\  \bm{\beta} \ge 0,
\end{equation*}
\begin{equation*}
  \lambda_2\ge 0,\ \bm{\gamma}\ge 0,
\end{equation*}
\begin{equation*}
\lambda_2(\mu_0-\bm{\mu}^{\top} \bm{\beta})=0,\ \gamma_i\beta_i = 0,i = 1,2,\cdots,n.
\end{equation*}
And KKT first-order necessary conditions is 
\begin{align*}
  \nabla_{\beta}\mathcal{L}( \bm{\beta}) = 0,
\end{align*}
Define that $A = \left\{i :\beta_i>0 \right\}$, $I = \left\{i : \beta_i=0\right\}$. For $i \in A$, we have $\gamma_i =0$. For $i \in I$, we have $\gamma_i \ge 0$.
We can obtain 
\begin{align}
   \bm{\beta}_A &= \Sigma^{-1}_{AA}\left(\lambda_1\mathbf{1}_A + \lambda_2\bm{\mu}_A\right),\label{eq:beta_A}\\
   \bm{\beta}_I &= \mathbf{0} \label{eq:beta_I}.
\end{align}

\begin{enumerate}
  \item When $i \in A$,
  $\Sigma_{AA} \bm{\beta}_{A} = \lambda_1\mathbf{1}_{A} +\lambda_2\bm{\mu}_{A}$.
    \begin{itemize}
      \item When $\lambda_2 = 0$.
      \begin{align*}
        \Sigma_{AA} \bm{\beta}_{A} &= \lambda_1\mathbf{1}_A \\
        \mathbf{1}_A^{\top} \bm{\beta}_{A} &= 1,
      \end{align*}
     we have 
     \begin{equation}\label{eq:bata when lambda2=0}
        \bm{\beta}_A = \frac{\Sigma_{AA}^{-1}\mathbf{1}_A}{\mathbf{1}_A^{\top}\Sigma_{AA}^{-1}\mathbf{1}_A}.
     \end{equation}
      \item When $\lambda_2 \neq 0$.
      \begin{align*}
            \left\{\begin{array}{l}
    \Sigma_{AA} \bm{\beta}_A = \lambda _1 \mathbf{1}_{A} + \lambda _2\bm{\mu}_{A},\\
     \bm{\mu}_{A}^{\top} \bm{\beta}_A = \mu_0,\\
     \mathbf{1}_{A}^{\top} \bm{\beta}_{A} = 1.
    \end{array}\right.
      \end{align*}
      We have 
      \begin{align}
            \left\{\begin{array}{l}
     \lambda_2 = \frac{\bm{\mu}_A^{\top}\Sigma_{AA}^{-1}\mathbf{1}_A-\mu_0\mathbf{1}_{A}^{\top}\Sigma_{AA}^{-1}\mathbf{1}_A}{\left(\bm{\mu}_A^{\top}\Sigma_{AA}^{-1}\mathbf{1}_A\right)^2+\bm{\mu}_{A}^{\top}\Sigma
     _{AA}^{-1}\mu_A\mathbf{1}_A^{\top}\Sigma_{AA}^{-1}\mathbf{1}_A},\\
         \lambda_1 = \frac{1-\lambda_2\mathbf{1}_A^{\top}\Sigma_{AA}^{-1}\bm{\mu}_A}{\mathbf{1}_A^{\top}\Sigma_{AA}^{-1}\mathbf{1}_A},\\
         \bm{\beta}_A = \lambda_1\Sigma_{AA}^{-1}\mathbf{1}_{A}+\lambda_2\Sigma_{AA}^{-1}\bm{\mu}_A.
    \end{array}\right.
      \end{align}
    \end{itemize}
  \item When $i \in {I}$, $\beta_{i} = 0$.
      \begin{equation*}
        \left(\Sigma\beta\right)_i = \lambda_1 + \lambda_2\mu_i - \gamma_i,
      \end{equation*}
      we have 
      \begin{equation}\label{eq:gamma}
        \gamma = \left\{\begin{array}{l}
0,\ i \in A,
 \\
\lambda _1\mathbf{1}_{I} + \lambda _2\mu_{I} - \Sigma_{I_{A}}\bm{\beta}_A, \ i \in I. 
\end{array}\right.
      \end{equation}
\end{enumerate}
If $\gamma_i \ge 0$, then the active set assumption holds.

The following algorithm (cf. Alg. \ref{alg:The determination of the active set}) describes the procedure for determining the active set.
\begin{algorithm}[htbp]
    \caption{The determination of the active set}
    \label{alg:The determination of the active set}
    \begin{algorithmic}[1] 
        \STATE \textbf{We assume no non-negativity constraints, solve that $\beta^*(w) = \Sigma^{-1}\left(\lambda_1\mathbf{1} + \lambda_2\mu\right)$.}
        
        \IF {\textbf{$\beta_i \ge 0$,$\forall i$},}
            \STATE Active set $A = \{1,2,\cdots,n \}$.
        
        \ELSIF {$\exists k$ such that $\beta_k^* < 0$,}
            \STATE Let $\beta_k=0$, $k\in I$.
        \ENDIF
        \STATE Solve Eq. \eqref{eq:beta_A} and Eq. \eqref{eq:gamma}.
        \IF {$\exists j \in A$, $\beta_j <0$,}
        \STATE Move $j$ from $A$ into $I$.
        \ENDIF
        \IF {$\exists i \in I$, $\gamma_i <0$,}
        \STATE Move $i$ form $I$ into $A$.
        \ENDIF
        \STATE Repeat the above steps until $\beta_A >0 $ and $\gamma_I \ge 0$, where $A$ is then correct.
    \end{algorithmic}
\end{algorithm}

By virtue of the aforementioned method, we have derived the analytical expression for the optimal solution of Eq. \eqref{eq:MO-MV eq}. Since both the expected return and the covariance matrix of the portfolio are functions of the risk factor $w$, the optimal solution expression is essentially a polynomial function with respect to $w$. Consequently, given the current expected returns of individual stocks and the upper-lower covariance matrices, there exists a unique optimal risk factor $w$ that minimizes the model's objective function.

\section{Simulation}\label{sec:Simulation}
In this section, we conduct empirical analysis using three sets of experimental data: one set of simulated generated data, and the other two sets of representative US stock data and A-share data respectively. These data are employed to test the practical performance of the proposed model (SLE-MUV model), which is further compared with the traditional mean-variance model (MV model). The empirical results demonstrate that investors can flexibly select the risk factor \( w \) in the SLE-MUV model according to the current market environment and their own investment preferences, thereby determining the optimal investment proportions.
\subsection{How to obtain mean, covariance matrix with uncertainty}\label{sec:mean_covariance}
 Now, we adopt the moving block method proposed by \cite{Yang2023} to estimate the minimum and maximum variances of data following a G-normal distribution.
 Assuming that a asset return $X^{(k)}\sim \mathcal{N}(\mu^{(k)}, [\underline{\sigma}^{(k)^2},\overline{\sigma}^{(k)^2}]),\ k=1,2,\dots N $ in time interval $T_0$ is $\left(x^{(k)}_i\right)_{i=1}^{T}$, the number of return is T. Give a block length $n_1$,  $\left(x^{(k)}_i\right)_{i=1}^{T}$ are scanned subsequently as $m = T-n_1+1$ blocks,
\begin{align}\label{eq: m blocks}
  \{1,2, \dots, n_1\},\ \{2,3, \dots, n_1+1\}, \dots, \{T-n_1+1, T-n_1+2, \dots, T\}.
\end{align}
Denote the data in the $l$th block by $B^{(k)}_l = \{(x^{(k)}_i)\}_{l\le i \le l+n_1-1,\ 1\le l\le m}$, assuming that $B^{(k)}_{l} \sim \mathcal{N}(\mu_l,\sigma_l^{(k)^2}) = \mu^{(k)}_l + \epsilon^{(k)}_l$, and $\epsilon^{(k)}_l \sim \mathcal{N} (0,\sigma_l^{(k)^2})$.

\begin{enumerate}
  \item \textbf{Estimators for the parameters $\mu^{(k)}, \underline{\mu}^{(k)}, \overline{\mu}^{(k)}$}.
\begin{align}\label{eq: mean of blocks}
  \tilde{\mu}^{(k)} = \frac{1}{T}\sum_{i=1}^{T}x_{i}^{(k)}.
\end{align}
The lower and upper means $\{\underline{\mu}^{(k)}, \overline{\mu}^{(k)}\}$ are estimated, respectively, by
\begin{align}\label{eq: lower and upper mean}
 \hat{ \underline{\mu}} ^{(k)}= \min_{1\le l \le m} \tilde{\mu}^{(k)}_{l},\quad \hat{\overline{\mu}}^{(k)} = \max_{1\le l \le m} \tilde{\mu}^{(k)}_{l}.
\end{align}

\item \textbf{Estimators for the parameters $\underline{\sigma}^{(k)^2}, \overline{\sigma}^{(k)^2}$}.
    
   \begin{align}\label{eq: variance of epsilon}
  \hat{\sigma}_l^{(k)^2} = \frac{1}{n_1-1}\sum_{i=l}^{l+n_1-1}{(x_i^{(k)}-\mu^{(k)}_l)}^2,
\end{align}
and then the lower variance $\underline{\sigma}^{(k)^2}$ is estimated by
\begin{align}\label{eq: lower variance}
  \underline{\sigma}^{(k)^2} = \min_{1\le l\le m}(\hat{\sigma}^{(k)}_l)^2.
\end{align}
Let $n_2\le n_1$, 
\begin{align}\label{eq: pre-calculate upper variance}
  \tilde{x}^{(k)}_i = x^{(k)}_i - \frac{1}{n_2}\sum_{i = 1+(j-1)n_2}^{jn_2}x^{(k)}_i, \ 1+(j-1)n_2 \le i \le jn_2, \ 1 \le j \le T/n_2.
\end{align}
For $1\le l \le m$,
\begin{align}\label{eq: variance of miniblocks}
\hat{\tilde{\sigma}}_l^{(k)^2} = \frac{1}{n_1-1}\sum_{i=l}^{l+n_1-1}(\tilde{x}^{(k)}_i)^2
\end{align}
we estimate the upper variance by
\begin{align}\label{eq:upper variance}
 \hat{\overline{\sigma}}^{(k)^2} = \max_{1\le j\le m}(\hat{\tilde{\sigma}}^{(k)})^2.
\end{align}

\item  \textbf{Estimators for the parameters $\overline{\sigma}^{(ij)^2}$, $ \underline{\sigma}^{(ij)^2}$}.
\begin{align}\label{eq: formulate of pho}
  \overline{\sigma}^{(ij)^2} &= \mathbb{E}[X^{(i)}X^{(j)}]-\mu^{(i)}\mu^{(j)}\nonumber\\
  \underline{\sigma}^{(ij)^2} &= -\mathbb{E}[-X^{(i)}X^{(j)}]-\mu^{(i)}\mu^{(j)}.
\end{align}
For $X^{(i)}X^{(j)}$, we have $X^{(ij)} = \left(x^{(i)}_k x^{(j)}_k\right)_{k=1}^{T}$, we can estimate the parameters $\underline{\sigma}^{(ij)^2} = \underline{\mu}^{(ij)}$, $\overline{\sigma}^{(ij)^2} = \overline{\mu}^{(ij)}$, where $\underline{\mu}^{(ij)}$, $\overline{\mu}^{(ij)}$ are the lower and upper means of $X^{(ij)}$ respectively.
\end{enumerate}
Based on the above estimation method, we can derive the portfolio mean, as well as the element-wise maximum and minimum matrices of the covariance matrix.
The logic of the solving algorithm in this article is as following:

First, stocks are selected in accordance with the preset target screening criteria to construct an initial investment portfolio. Second, the mean return of each stock in the portfolio and the return covariance matrix of the portfolio are calculated, followed by a positive definiteness test on the covariance matrix. 
If the covariance matrix is non-positive definite, stocks are reselected to update the portfolio, and the above steps of calculating the mean return and covariance matrix are repeated. 
If the covariance matrix satisfies the positive definiteness requirement, the cvxpy algorithm\citep{diamond2016cvxpy} is adopted to solve the problem of optimal investment weight allocation.

In the solution stage of the cvxpy algorithm, if a valid feasible solution is obtained, the optimal investment weights are directly output. If no feasible solution is found, stocks are reselected to update the portfolio, and the process backtracks to the step of covariance matrix calculation. 
If repeated attempts at stock reselection all fail, the current investment portfolio is retained, and a positive-definiteness fine-tuning correction is performed on its covariance matrix; after correction, the cvxpy algorithm is applied again for solution.

If no valid feasible solution can be derived from the cvxpy algorithm under all scenarios, the solution algorithm is switched to the Sequential Least Squares Quadratic Programming (SLSQP) algorithm\citep{2020SciPy-NMeth} to obtain a local optimal solution for the portfolio optimization problem, which is then output to complete the entire algorithm process. The specific algorithmic process is detailed in Alg. \ref{alg:portfolio_optimization}.

\begin{algorithm}[htbp]
    \caption{Investment Portfolio Optimal Weight Solving Algorithm}
    \label{alg:portfolio_optimization}
    \begin{algorithmic}[1] 
        \STATE \textbf{Step 1: Construct Initial Portfolio.}
        \STATE Select stocks according to preset target screening criteria to build the initial investment portfolio.
        
        \STATE \textbf{Step 2: Calculate Portfolio Statistical Indicators.}
        \STATE Compute the mean return of each stock in the portfolio and the return covariance matrix $\Sigma$ of the portfolio.
        
        \STATE \textbf{Step 3: Positive Definiteness Test of Covariance Matrix.}
        \STATE Perform a positive definiteness test on $\Sigma$:
        \IF {$\Sigma$ is non-positive definite,}
            \STATE Reselect stocks to update the portfolio, and \textbf{go back to Step 2}.
        \ENDIF
        
        \STATE \textbf{Step 4: Solve with cvxpy Algorithm.}
        \STATE Execute the cvxpy algorithm for solution:
        \IF {A valid feasible solution is obtained,}
            \STATE Output the optimal investment weights directly, and terminate the algorithm.
        \ELSIF {No feasible solution is found}
            \STATE Reselect stocks to update the portfolio, and \textbf{go back to Step 2}.
        \ENDIF
        
        \STATE \textbf{Step 5: Covariance Matrix Positive-Definiteness Fine-Tuning.}
        \IF {repeated attempts at stock reselection all fail,}
            \STATE Retain the current investment portfolio, perform positive-definiteness fine-tuning correction on $\Sigma$ to make it positive definite, and \textbf{go back to Step 4}.
        \ENDIF
        \STATE \textbf{Step 6: Switch to SLSQP Algorithm for Local Optimal Solution.}
        \IF { no valid feasible solution can be derived from cvxpy under all scenarios,}
            \STATE Switch the solution algorithm to SLSQP, solve for the locak optimal solution of the portfolio optimization problem, output the result, and terminate the algorithm.
        \ENDIF
    \end{algorithmic}
\end{algorithm}

\subsection{Empirical Study on Synthetic Data}\label{sec: Synthetic Data}
To further validate our conclusion, we now use the method by \cite{Yang2023} to construct $N$ asset returns $\left(X^{(i)}_j\right)_{j=1}^{T}$,$i = 1,2,...,N$ that follow different G-normal distribution $\mathcal{N}\left(\mu_i,[\underline{\sigma}^{(i)^2},\overline{\sigma}^{(i)^2}]\right)$, and use Algorithm \ref{alg:portfolio_optimization} to solve the optimal solution of the Mean-Variance model under sublinear expectations. 

\textbf{Data generation process}.
       The sample $\left(X^{(i)}_j\right)_{j=1}^{T}$ satisfy
       \begin{equation*}
         X_j^{(i)} = \epsilon_{j},\ 1+ n_0(k-1) \le j \le n_0 k,\  1\le k \le K.
       \end{equation*}
       where $\epsilon_{j} \sim \mathcal{N}(\mu_i,\sigma_{j}^{(i)^2})$,$1+ n_0(k-1) \le j \le n_0 k$ with $\sigma_{j}^{(i)^2} \in [\underline{\sigma}^{(i)^2},\overline{\sigma}^{(i)^2}]$. Specifically,
        $T$ pieces of data are divided into $K$ groups, each group containing $n_0$ pieces of data, following a normal distribution $\mathcal{N}(\mu_i,\sigma_{j}^{(i)^2})$.
        
Assuming $N=4$ and with reference to the daily return data of real stocks, we set the mean of the synthetic data to (0.0011, 0.00033, 0.00062, 0.00056), the upper variance to (0.000676, 0.000196, 0.0004, 0.000625) and the lower variance to (0.000484, 0.0001, 0.000289, 0.0004), respectively. After determining the expected returns for each asset, let $K = 4$, $n_0 = 250$, we obtain four sample sets that follow different G-normal distributions and the upper and lower variances $\hat{\overline{\sigma}}^{(i)^2}$, $\hat{\underline{\sigma}}^{(i)^2}$ of each G-normal distribution by Subsection \ref{sec:mean_covariance}. In this experiment, the mean returns and upper and lower variances of the four assets we obtained are shown in Table \ref{tab:mean Variance}.
\begin{table}[htbp]
    \centering
    \caption{Mean and upper and lower variance of returns on four assets (values multiplied by $10^{-4}$)}
    \label{tab:mean Variance}
    \sisetup{
        round-mode=places,
        round-precision=4,
        table-format=-1.4
    }
    \begin{tabular}{c *{4}{S}}
        \toprule
        Asset & {$\mu$ ($\times 10^{-4}$)} & {$\hat{\sigma}^2$ ($\times 10^{-4}$)} & {$\hat{\overline{\sigma}}^2$ ($\times 10^{-4}$)} & {$\hat{\underline{\sigma}}^2$ ($\times 10^{-4}$)} \\
        \midrule
        $X^{(1)}$ & 21.7485 & 5.3408 & 5.6973 & 4.4581  \\ 
        $X^{(2)}$ &  1.4013 & 1.6485 & 2.2457 & 1.1721  \\ 
        $X^{(3)}$ & 3.8578 & 3.2588 & 4.3433 & 2.2867  \\ 
        $X^{(4)}$ & -20.2246 & 6.0732 & 6.7606 & 5.2222  \\ \hline
    \end{tabular}
\end{table}
The upper and lower covariance matrices of synthetic data are
\begin{align*}
\small{
  \overline{V} = 10^{-4} \cdot \begin{pmatrix}
 5.6973  & 0.3749  & 0.4031  & 0.5626   \\ 
 0.3749  & 2.2457  & 1.7837  & 0.5936   \\ 
 0.4031  & 1.7837  & 4.3433  & 0.7980   \\ 
 0.5626  & 0.5936  & 0.7980  & 6.7606   \\ 
\end{pmatrix}},\ 
\small{\underline{V} = 10^{-4} \cdot \begin{pmatrix}
 4.4581  & -0.6167  & -0.8375  & -0.4505   \\ 
 -0.6167  & 1.1721  & -1.1163  & -0.6402   \\ 
 -0.8375  & -1.1163  & 2.2867  & -0.6523   \\ 
 -0.4505  & -0.6402  & -0.6523  & 5.2222   \\ 
\end{pmatrix}}.
\end{align*}

By Algorithm \ref{alg:portfolio_optimization}, a weighted convex optimization problem \eqref{eq:SP-MV eq} equivalent to the multi-objective mean variance model \eqref{eq:MO-MV eq} is solved, and the optimal solution is obtained when the risk factor is set to [0,1], thereby obtaining the Pareto frontier curve of the multi-objective optimization problem \eqref{eq:MO-MV eq} on the first sample day, as shown in the Figure \ref{fig: Pareto_frontier}.
\begin{figure}[htbp]
  \centering
  \includegraphics[width=0.6\textwidth]{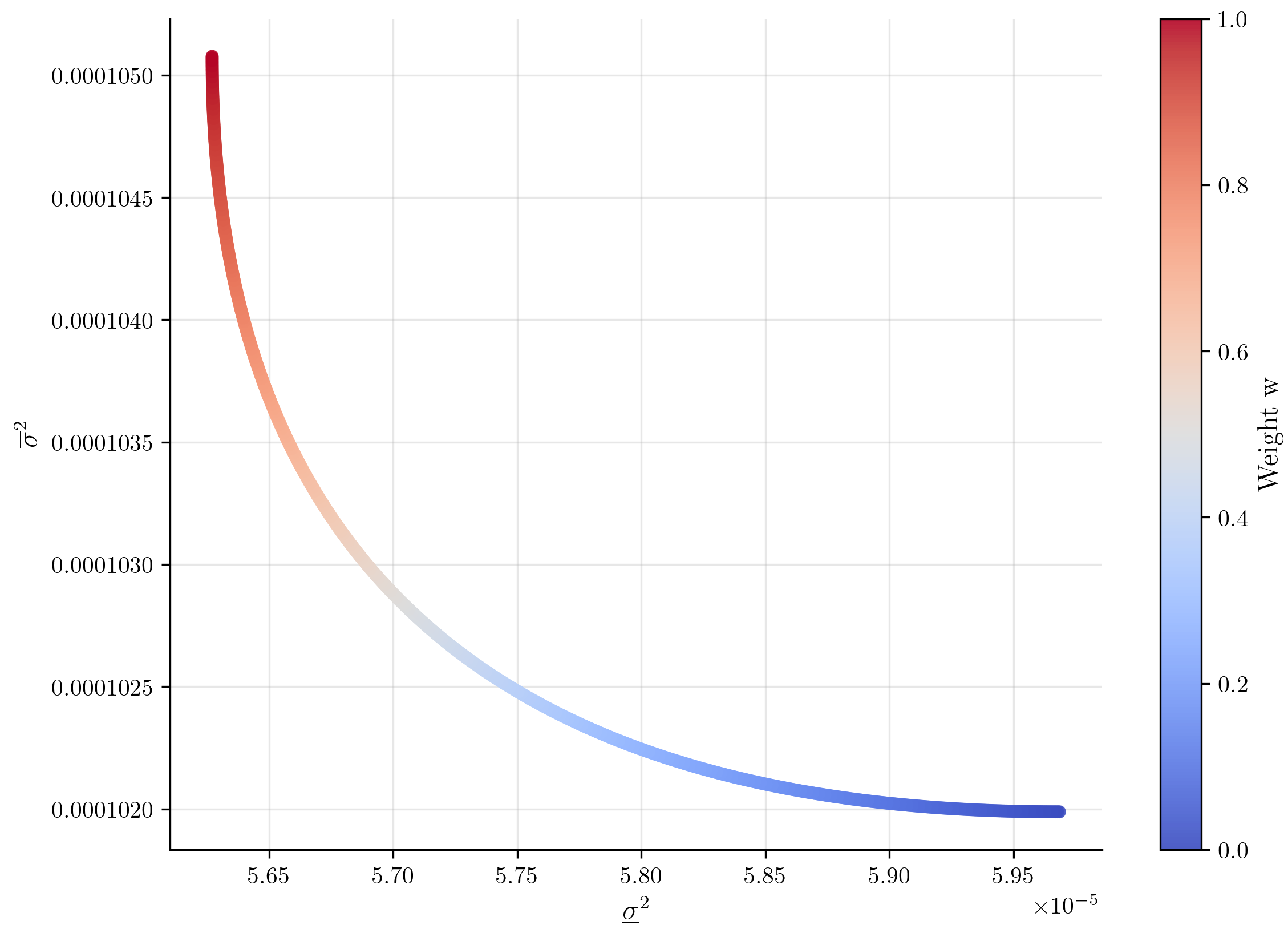}
  \caption{Pareto frontier of portfolios constructed with synthetic data on the first sample day}\label{fig: Pareto_frontier}
\end{figure}

We also compared it with the classical mean variance model, where the covariance matrix is
\begin{equation*}
  \hat{V} = 10^{-4} \cdot \begin{pmatrix}
  5.6138  & -0.0533  & -0.0064  & -0.0204   \\ 
        -0.0533  & 1.6001  & -0.0137  & 0.0268   \\ 
        -0.0064  & -0.0137  & 3.1433  & -0.1179   \\ 
        -0.0204  & 0.0268  & -0.1179  & 5.5147   \\ 
\end{pmatrix}.
\end{equation*}
 
For each portfolio selection model, we determine the initial optimal portfolio weight vector $\bm{\beta}^*$ using the first 252 trading days as in-sample data. The first out-of-sample portfolio return is calculated as $\bm{\beta}^{*^{\top}}\bm{\mu}$, where $\bm{\mu}$ denotes the daily return vector of selected stocks on the 253th trading day (the first out-of-sample trading day). 
A rolling window approach is then adopted to slide the window forward, with out-of-sample returns computed for all consecutive trading days over the following 500 days.

The optimal configuration at 500th sample day can be obtained as 0.1803, 0.4355,0.2702,0.1140, with a return of 0.00365 and a variance of 0.0000922 at End Sample. 
By comparing SLE-MUV model with MV model, we found that when $w=0.523$, the calcumulaive wealth value of the two models' investment portfolios under optimal allocation at 500th sample day are almost equal, indicating the adaptability of the SLE-MUV model proposed in this paper, which can replace the MV model to allocate investment portfolios. 
Another advantage of this model is that investors can determine the weight $w$ based on their own investment habits and the current investment market environment, and directly obtain the optimal allocation of assets. 
\begin{figure}[H]
  \centering
  \begin{subfigure}[htbp]{0.49\textwidth}
  \centering
  \includegraphics[width=1\textwidth]{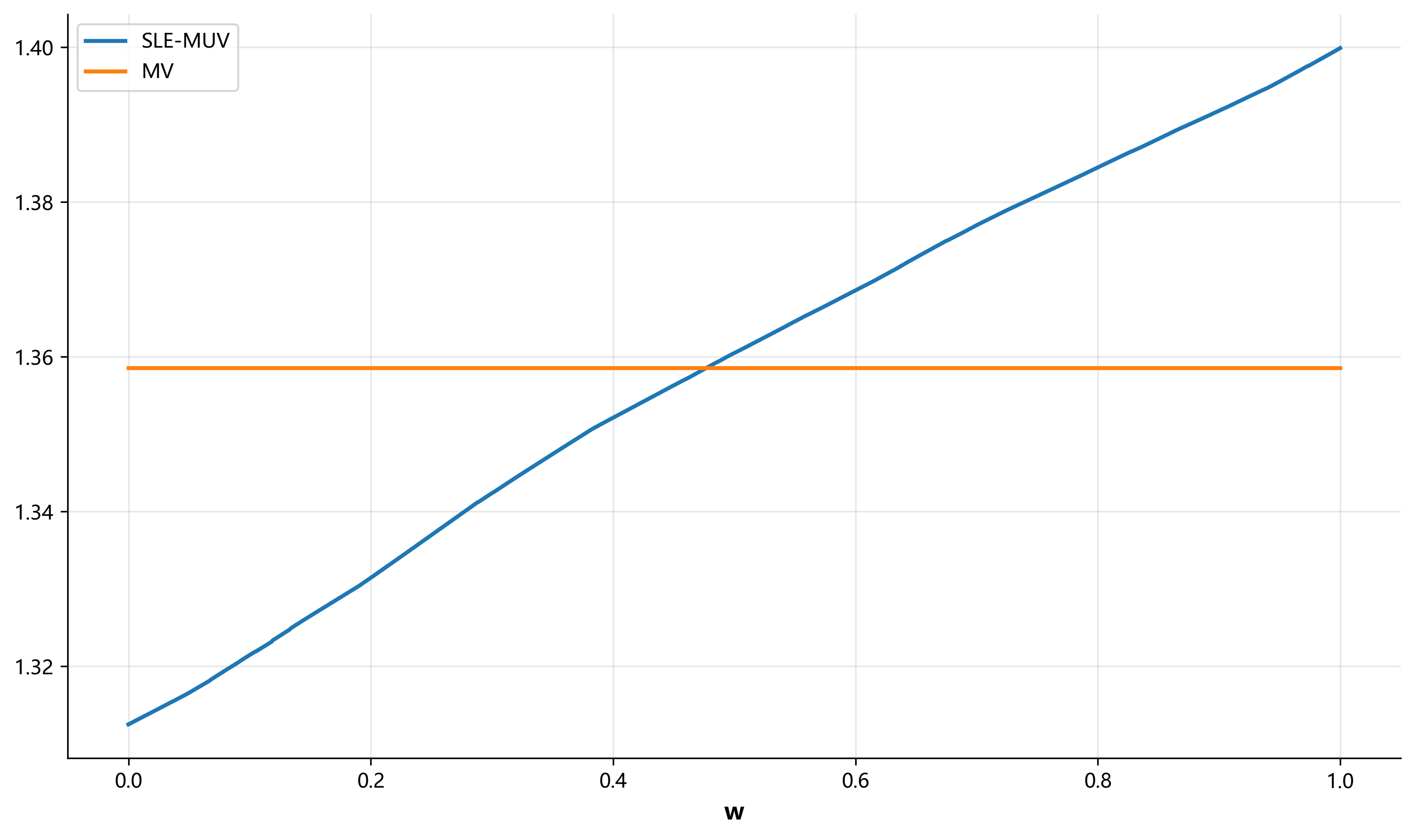}
  \caption{Cumulative Wealth at End Sample}\label{Cumulative Wealth at End Sample}
\end{subfigure}
\begin{subfigure}[htbp]{0.49\textwidth}
  \centering
  \includegraphics[width=1\textwidth]{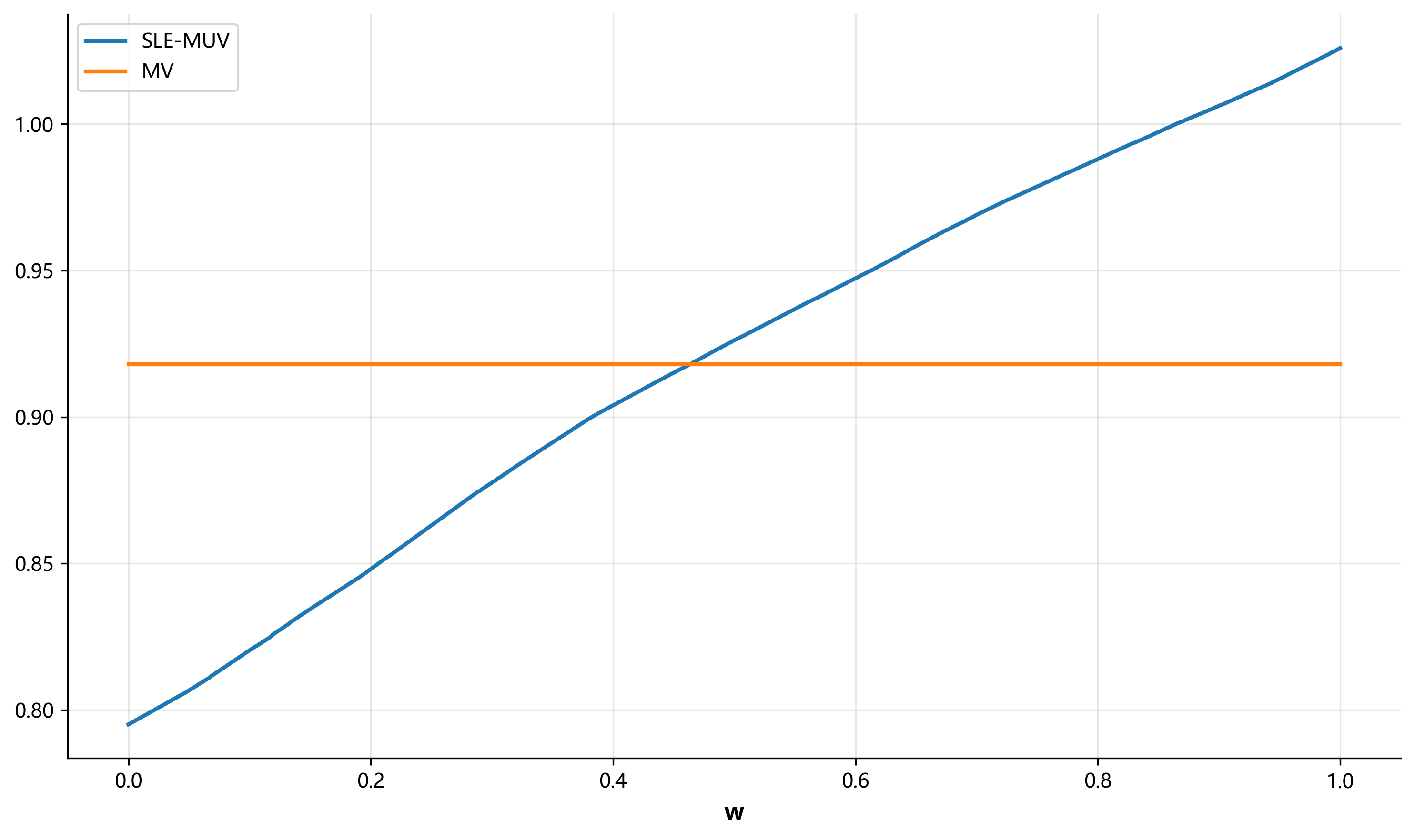}
  \caption{Sharpe Ratio}\label{Sharpe Ratio-Synthetic}
  \end{subfigure}
\caption{Cumulative Wealth as of End Sample,Sharpe Ratio: SLE-MUV Model vs. Traditional MV Model}
  \end{figure}

Figure \ref{Cumulative Wealth at End Sample} illustrates the evolution of cumulative wealth for both SLE-MUV model and MV model as a function of parameter $w$ ($0\le w \le 1$) at the end of the sample period. 
The traditional MV model serves as the benchmark, maintaining a stable cumulative wealth level of 1.3585 throughout the parameter range. 
The SLE-MUV model exhibits a sustained upward trend in cumulative wealth as $w$ increases at end sample.
Specifically, for $w \in [0,0.522]$, the cumulative wealth of the SLE-MUV model is slightly lower than that of the traditional MV model.
For $w \in [0.523,1]$, the SLE-MUV model outperforms the MV model and the performance advantage widens continuously. At $w = 1$, the cumulative wealth of the SLE-MUV model peaks at 1.3998, representing a 0.0413 increase relative to the traditional MV model, highlighting its superior ability to capture excess returns in the high-parameter region.

As shown in Figure \ref{Sharpe Ratio-Synthetic}, the Sharpe Ratio of the SLE-MUV model is closely correlated with its cumulative wealth trend, and it gradually increases with the rise of the risk factor w. The Sharpe Ratio of the MV model is 0.918, while the Sharpe Ratio of the SLE-MUV model increases from the initial value of 0.7951 to a peak of 1.0259 when w=1, which is significantly higher than that of the benchmark model. From a portfolio management perspective, the higher Sharpe ratio of the SLE-MUV model implies that, at equivalent risk levels, it can generate greater excess returns by optimizing portfolio weights, demonstrating superior risk pricing efficiency. This enhances the quality of portfolio returns without amplifying systematic risk.

The traditional MV model achieves excellent downside risk control with a maximum drawdown of 10.33\%. In contrast, the SLE-MUV model consistently exhibits a larger absolute maximum drawdown, which decreases from 12.33\% to 11.17\% as $w$ increases. However, the difference between the two models remains moderate, and there is no evidence of extreme risks.
This pattern is consistent with the basic financial principle of ``higher return corresponding to higher risk'': the SLE-MUV model assumes more prominent tail risk to achieve greater cumulative wealth and Sharpe Ratio. Although its downside risk exceeds that of the traditional MV model, the marginal increase in risk premium is more significant.

Investors can dynamically adjust the factor risk $w$ to optimize the risk-return tradeoff: increasing $w$ during market upswings to amplify returns, and reducing $w$ during market downturns to mitigate risk, thereby achieving dynamic portfolio optimization.
\begin{figure}[H]
  \centering
  \begin{subfigure}[b]{0.49\textwidth}
  \includegraphics[width=1\textwidth]{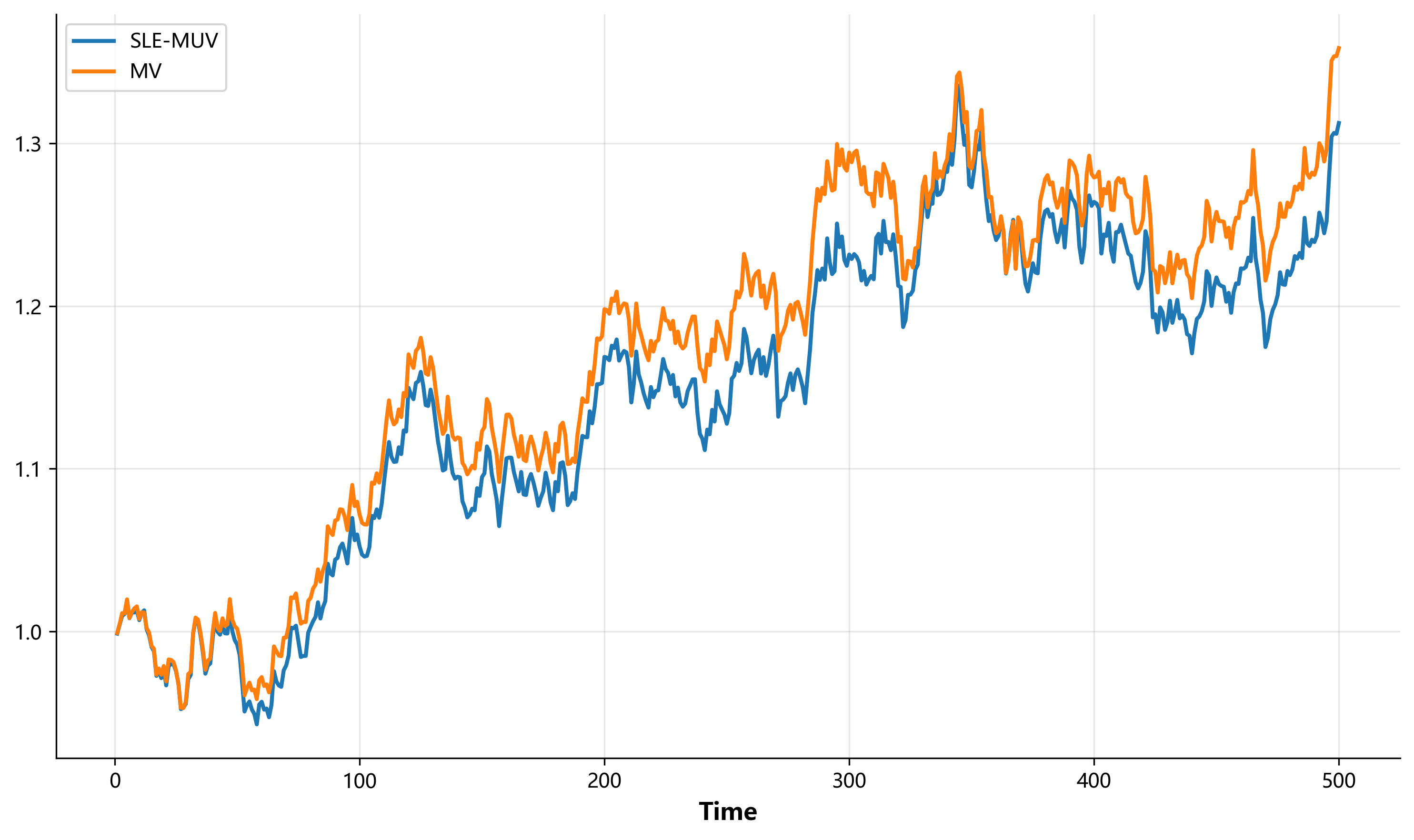}
  \caption{w=0.}\label{Synthetic Data w=0}
\end{subfigure}
\begin{subfigure}[b]{0.49\textwidth}
  \centering
  \includegraphics[width=1\textwidth]{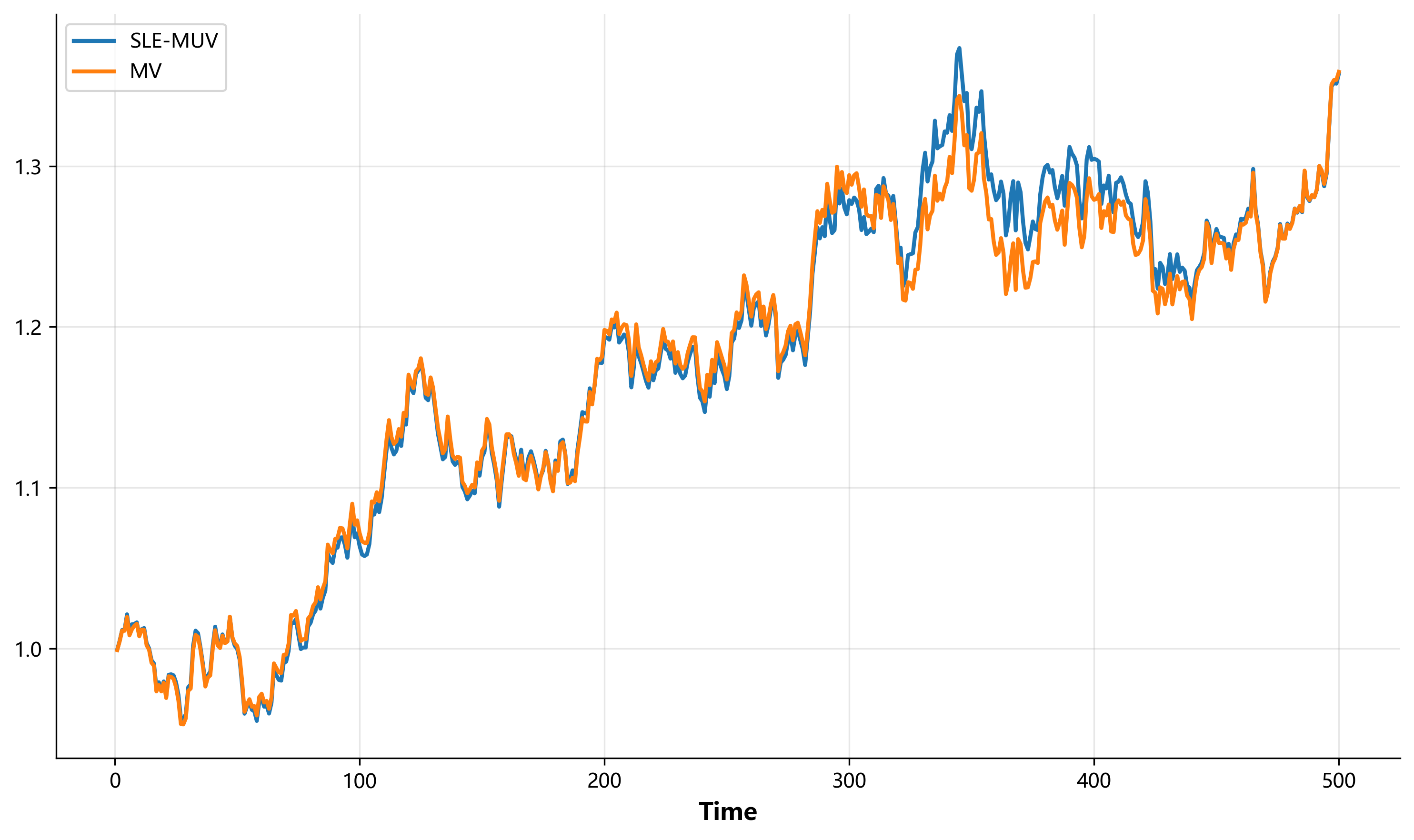}
  \caption{w=0.464.}\label{Synthetic Data w=0.464}
\end{subfigure}
\begin{subfigure}[b]{0.49\textwidth}
  \centering
  \includegraphics[width=1\textwidth]{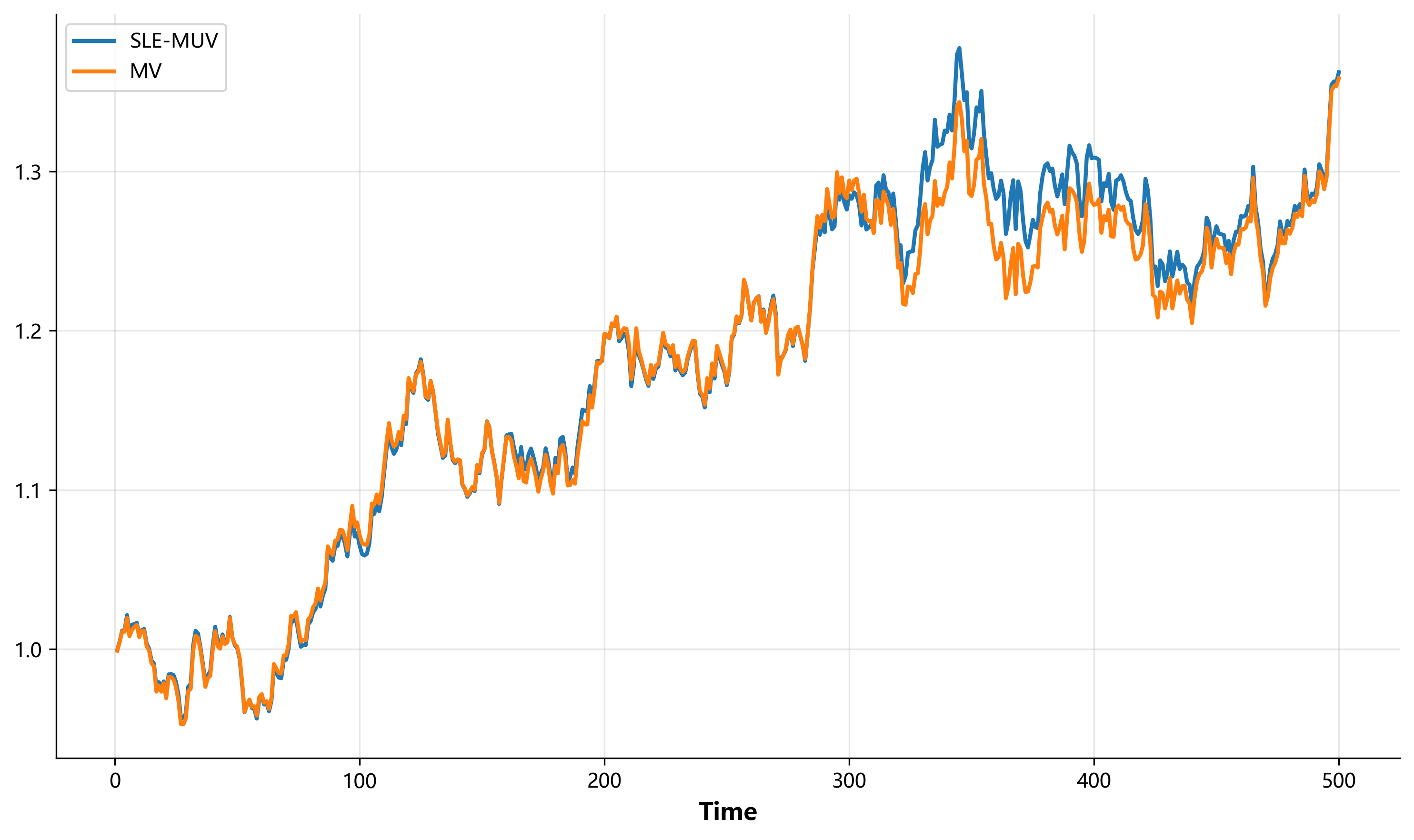}
  \caption{w=0.523.}\label{Synthetic Data w=0.523}
  \end{subfigure}
  \begin{subfigure}[b]{0.49\textwidth}
  \centering
  \includegraphics[width=1\textwidth]{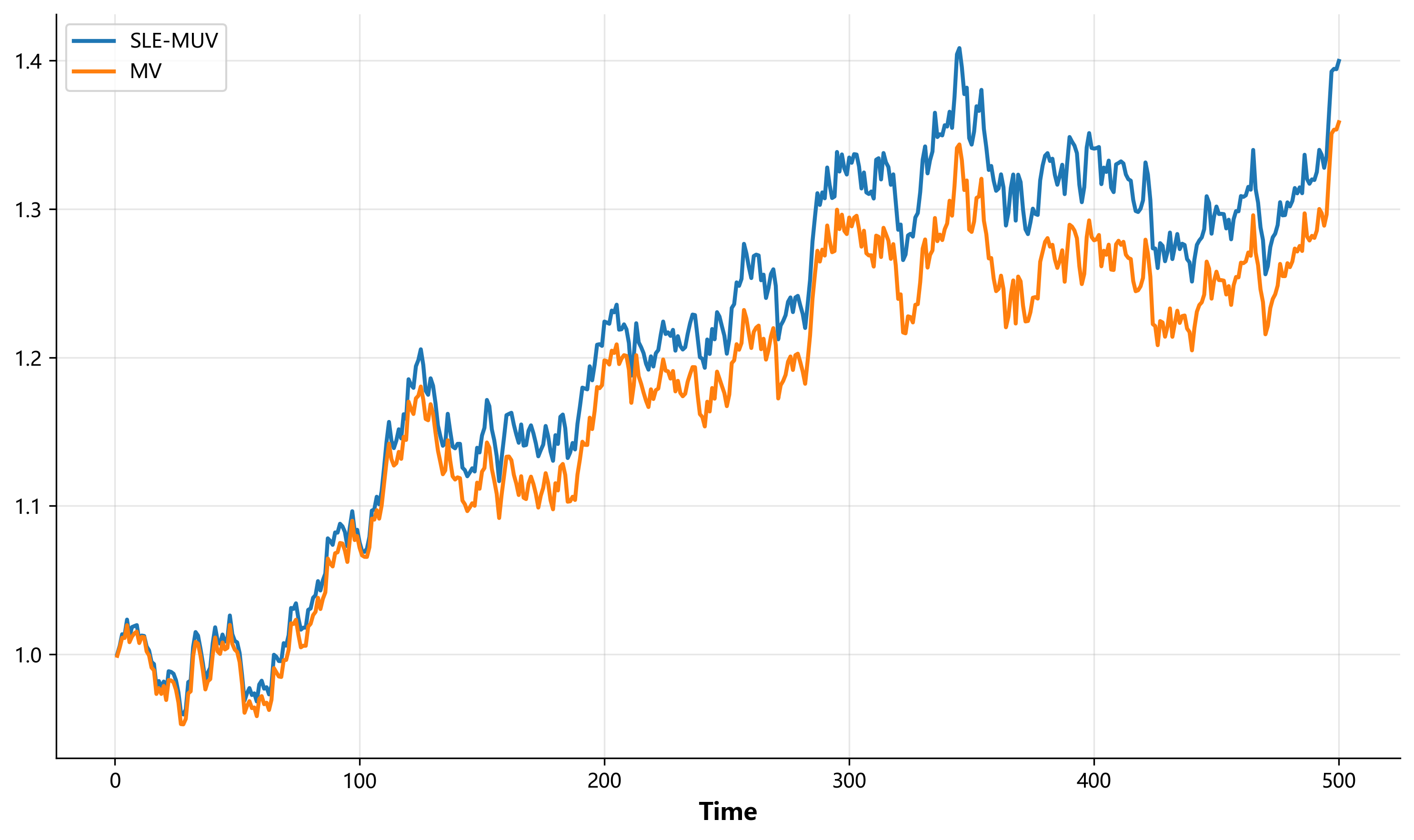}
  \caption{w=1.}\label{Synthetic Data w=1.0}
  \end{subfigure}
  \caption{Protfolio Cumulative Wealth for SLE-MUV model and traditional MV Model under Different w Values.\protect\footnotemark}
  \label{fig:Sythetic data wealth fig}
\end{figure}
  \footnotetext{MV-SLE and MV-LE denote the curves of the SLE-MV model and the conventional MV model, respectively.}
Figure \ref{fig:Sythetic data wealth fig} presents the cumulative wealth curves of the investment portfolio. We select the parameter values of $w$ that correspond to the minimum and maximum differences in maximum drawdown, as well as the points where the cumulative wealth and Sharpe Ratio  of the SLE-MUV model equals from that of the traditional MV model, namely $w = 0, 1, 0.464, 0.523$, for comparative analysis.
As illustrated in Figure \ref{fig:Sythetic data wealth fig}, over the 500-trading-day sample period, the cumulative wealth curves of both models exhibit a highly similar overall fluctuation pattern. However, as the risk factor $w$ gradually decreases, the advantage of the SLE-MUV model in cumulative wealth becomes increasingly pronounced, which is fully consistent with the results presented in Figure \ref{Cumulative Wealth at End Sample}. For investors with strong risk-bearing capacity, a long investment horizon, and no pursuit of short-term market fluctuations, they can select the risk factor $w \in [0.523,1]$, especially $w=1$, to construct an investment portfolio that meets their own needs and achieve the expected investment objectives.

The experimental results indicate that the optimal value of the risk factor $w$ is sensitive to generated samples and varies with changes in generated data. Accordingly, its optimal parameter should be selected based on the out-of-sample performance of the portfolio. In this paper, we construct investment portfolios and conduct empirical analyses using representative individual stocks from the U.S. and A-share markets.  The empirical results demonstrate that the evolutionary relationships between the three core evaluation indicators (cumulative wealth, Sharpe ratio and maximum drawdown) and the risk factor w exhibit obvious market heterogeneity. On this basis, the optimal value of the risk factor can be determined according to the actual portfolio performance under different $w$ settings, which further guides the asset weight allocation of the portfolio.

\subsection{Empirical Study on Real Data}
In this subsection, to verify the effectiveness and superiority of the SLE-MUV model, we select 6 representative U.S. stocks and 6 A-shares as experimental targets to conduct backtesting simulation experiments. The experimental results fully demonstrate that, compared with the traditional MV model, by flexibly adjusting and selecting the risk factor w (with a value range of [0,1]) of the SLE-MUV model, investors can not only significantly improve the cumulative wealth level and Sharpe ratio of the investment portfolio, realize the optimization of return capacity, but also stably control the gap in maximum drawdown between the SLE-MUV model and the MV model within a reasonable and controllable range, balance returns and risks, and highlight the application value and advantages of the SLE-MUV model in practical investment decisions.
\subsubsection{US stocks}\label{sec:US stocks}
 We selected six representative U.S. stocks --- Apple (AAPL), NVIDIA (NVDA), Microsoft (MSFT), Johnson \& Johnson (JNJ), Pfizer (PFE), and Merck \& Co. (MRK) --- to construct our investment portfolio and validate the feasibility and superiority of the proposed method based on real stock data. 
The empirical time horizon covers the period from January 2, 2019, to December 25, 2025. 
The daily return is calculated in accordance with the definition proposed in \cite{cai2025}: for a stock on the $t$-th trading day, its daily return $\mu_t$ is defined as $\mu_t = (P_{t+1}-P_{t})/P_t$, Where $P_t$ denotes the adjusted closing price of the stock on the $t$-th trading day.
\begin{figure}[htbp]
  \centering
  \begin{subfigure}[htbp]{0.49\textwidth}
  \centering
  \includegraphics[width=1\textwidth]{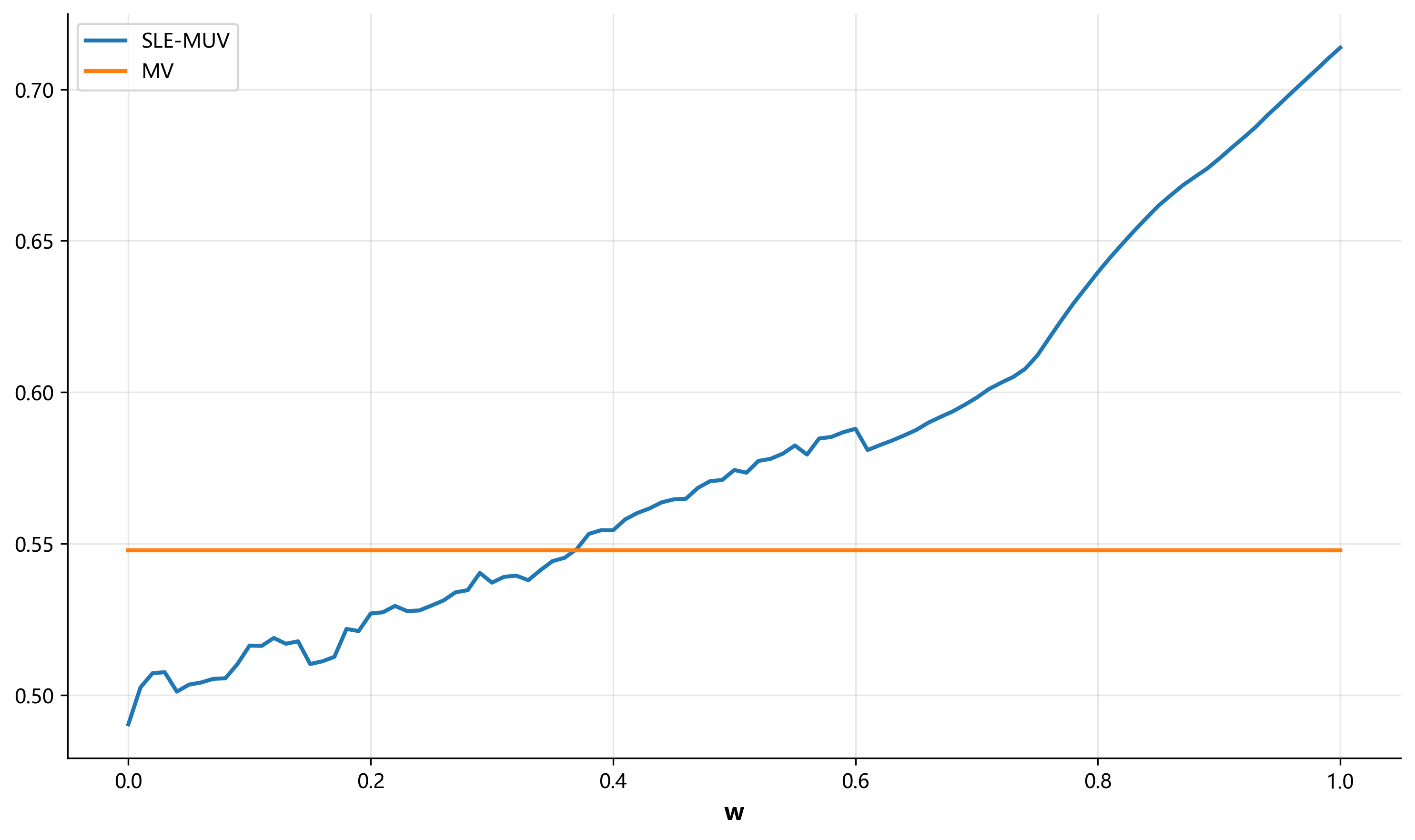}
  \caption{Cumulative Wealth at Dec 22, 2025}\label{US.6company-Accumulated Wealth Difference}
\end{subfigure}
\begin{subfigure}[htbp]{0.49\textwidth}
  \centering
  \includegraphics[width=1\textwidth]{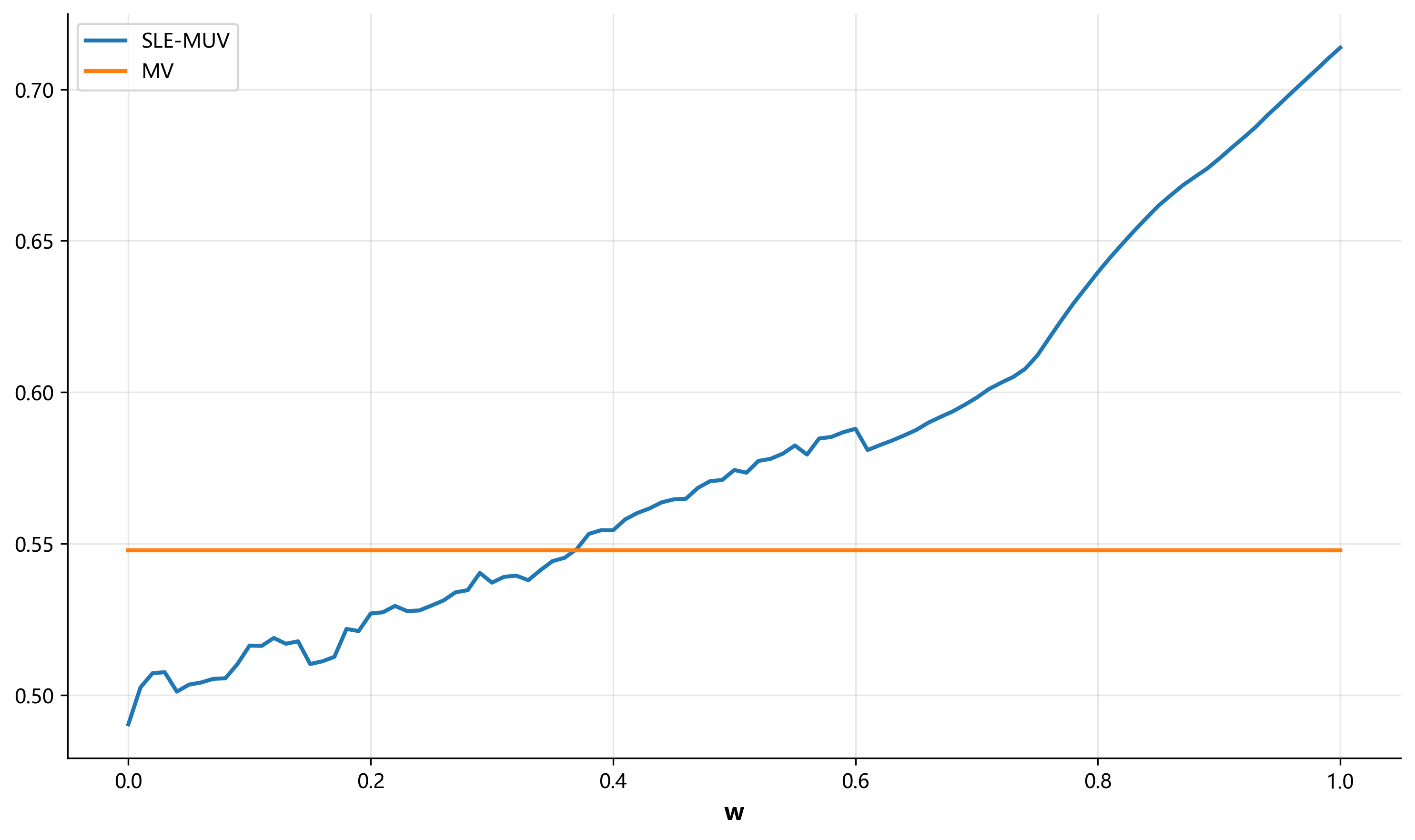}
  \caption{Sharpe Ratio}\label{US-6company-SR}
  \end{subfigure}
\caption{Cumulative Wealth as of December 22, 2025 and Sharpe Ratio : SLE-MUV Model vs. Traditional MV Model}
  \end{figure}

\begin{table}[htbp]
\caption{Cumulative Wealth (CW), Sharpe Ratio (SR), and Maximum Drawdown (MD) of the SLE-MUV Model and MV Model under Different Risk Factor w}
\label{tab:6company-CW,MD,SR}
\begin{tabular*}{\hsize}{@{}@{\extracolsep{\fill}}clllccc@{}}
\hline
 & \multicolumn{3}{c}{MUV model under SLE}                                 & \multicolumn{3}{c}{Traditional MV model}                                      \\ \hline
w    & CW    & SR    & MD     & \multicolumn{1}{l}{CW}  & \multicolumn{1}{l}{SR}  & \multicolumn{1}{l}{MD}    \\\hline
0.00                         & 1.911                         & 0.490                         & -0.291                         &                         &                         &                           \\
0.17 &1.958 & 0.513 & -0.285 &                         &                         &                           \\
0.37                         & 2.037                         & 0.548                         & -0.257                         &                         &                         &                           \\
0.5                          & 2.101                         & 0.574                         & -0.257                         &                         &                         &                           \\
1.00                         & 2.466                         & 0.714                         & -0.227                         & \multirow{-5}{*}{1.962} & \multirow{-5}{*}{0.548} & \multirow{-5}{*}{-0.2571}\\ \hline
\end{tabular*}
\end{table}
\begin{figure}[H]
  \begin{subfigure}[b]{0.49\textwidth}
  \includegraphics[width=1\textwidth]{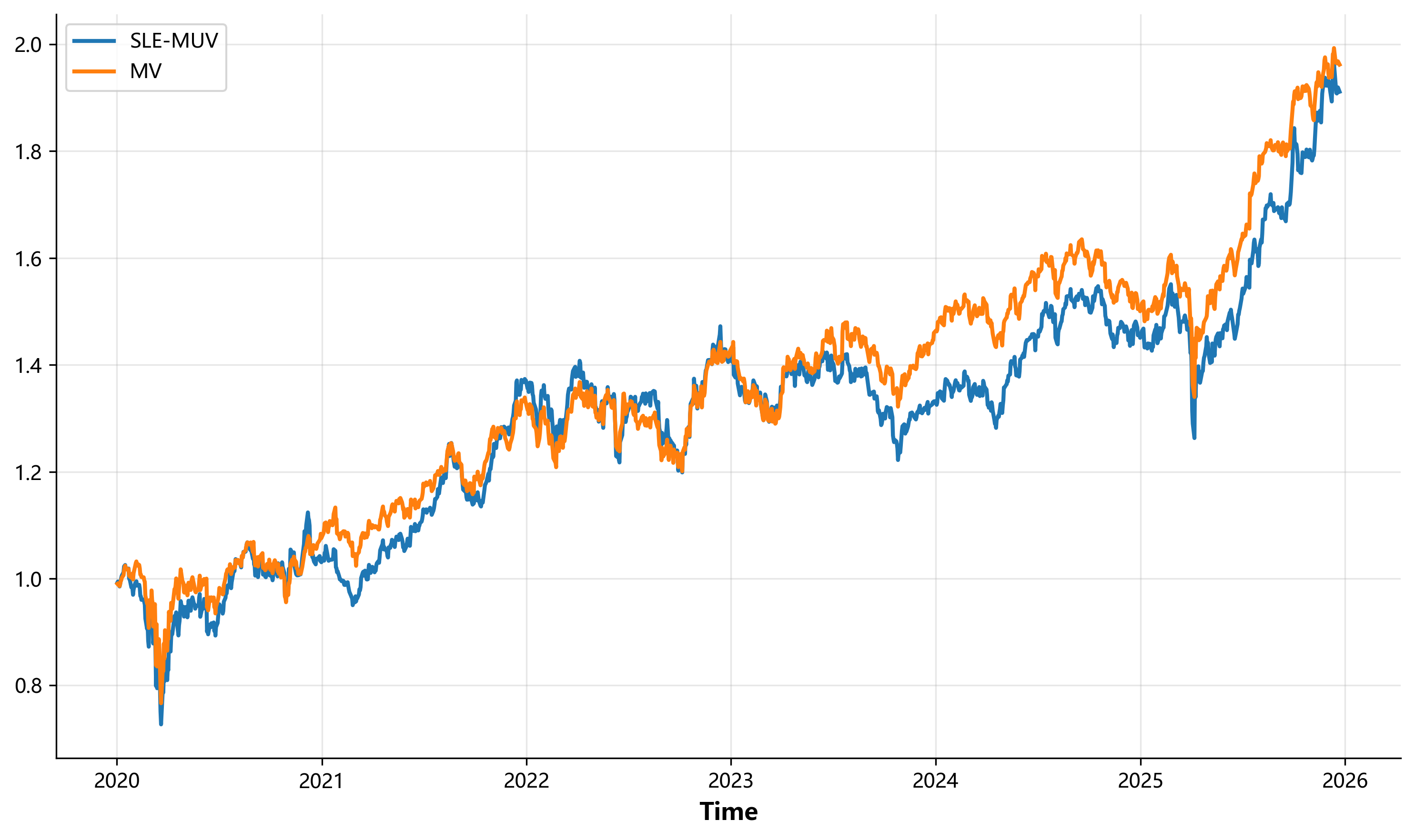}
  \caption{w=0}\label{6company w=0.}
\end{subfigure}
\begin{subfigure}[b]{0.49\textwidth}
  \centering
  \includegraphics[width=1\textwidth]{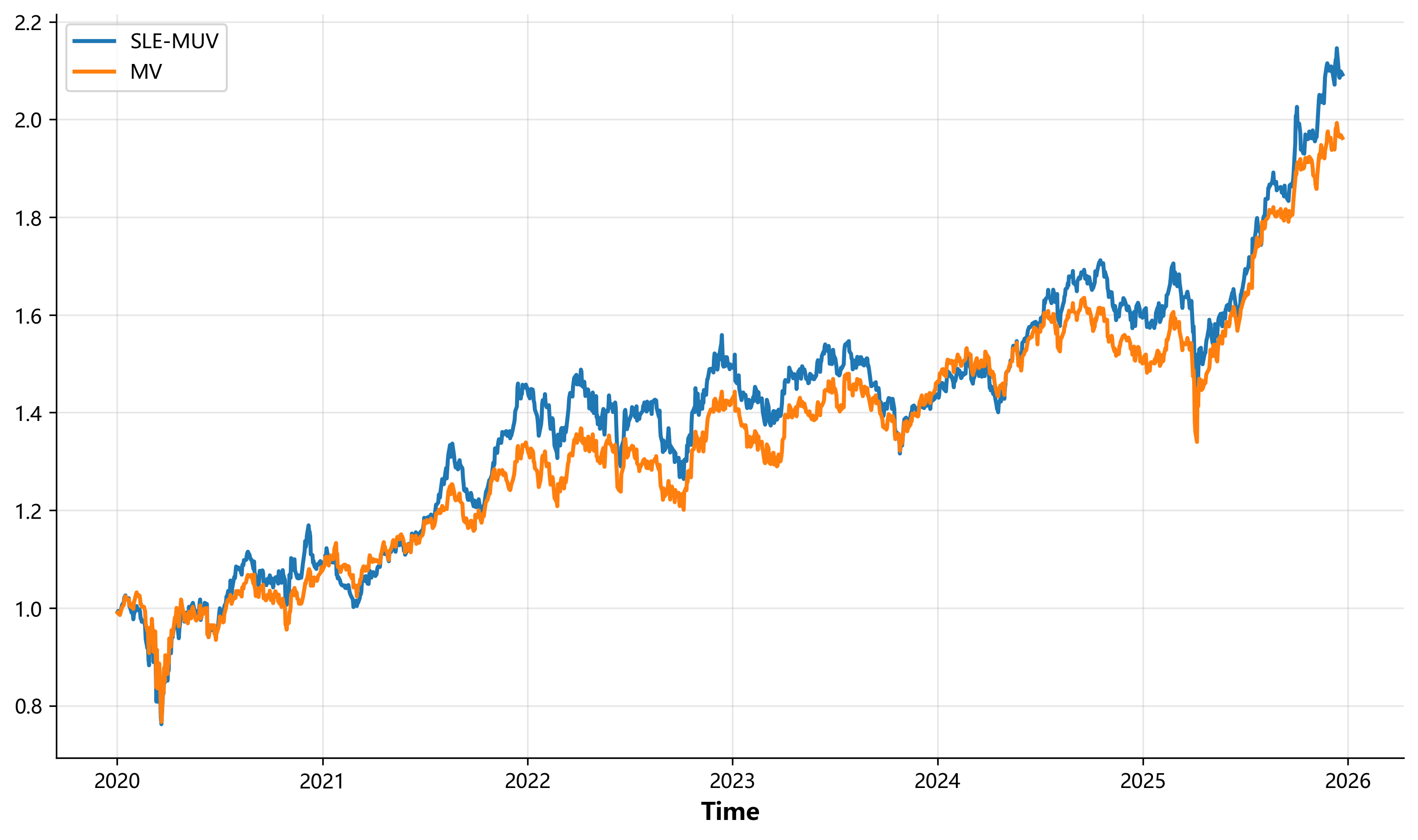}
  \caption{w=0.48.}\label{6company w=0.48}
\end{subfigure}
\begin{subfigure}[b]{0.49\textwidth}
  \centering
  \includegraphics[width=1\textwidth]{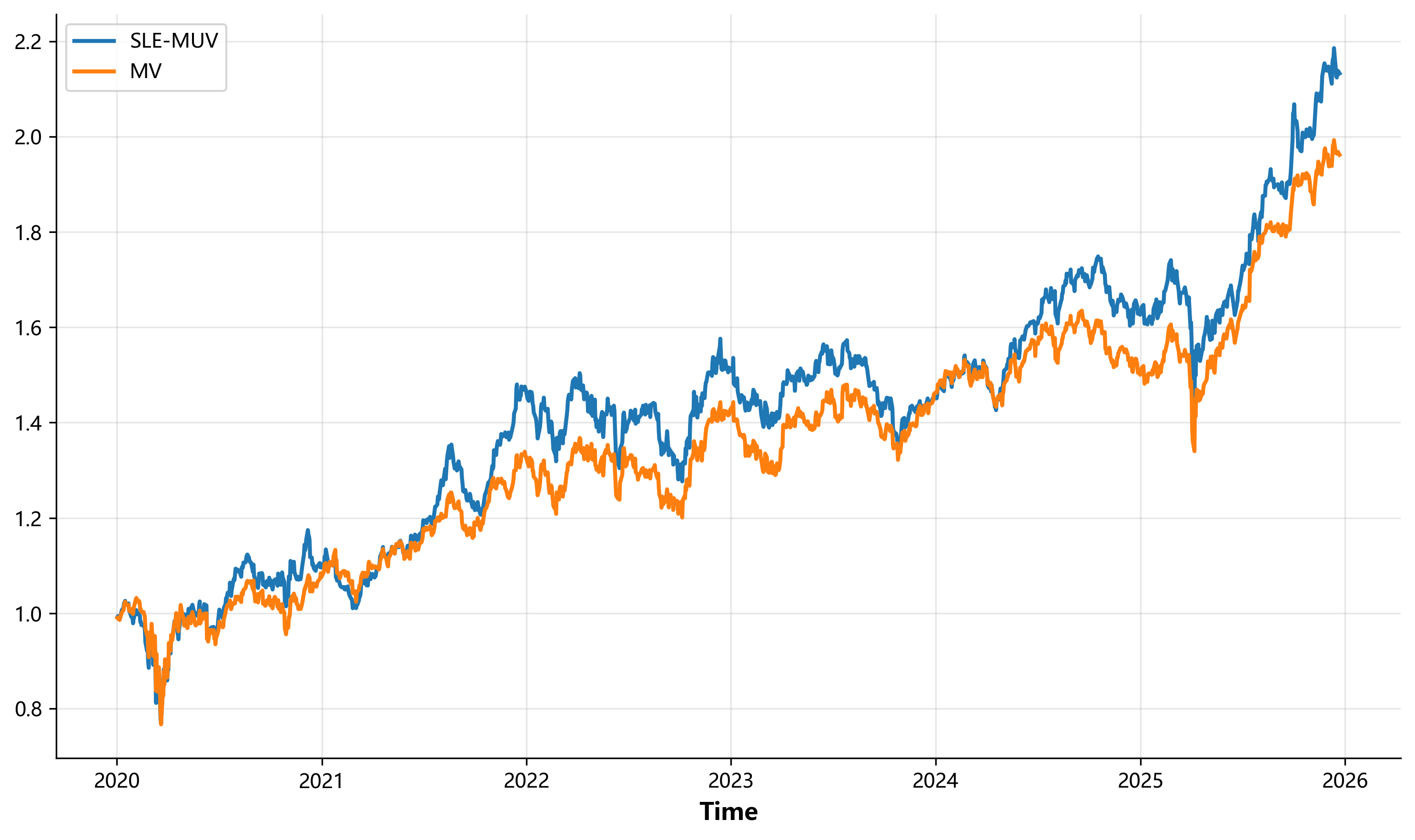}
  \caption{w=0.6.}\label{6company w=0.6}
  \end{subfigure}
\begin{subfigure}[b]{0.49\textwidth}
  \centering
  \includegraphics[width=1\textwidth]{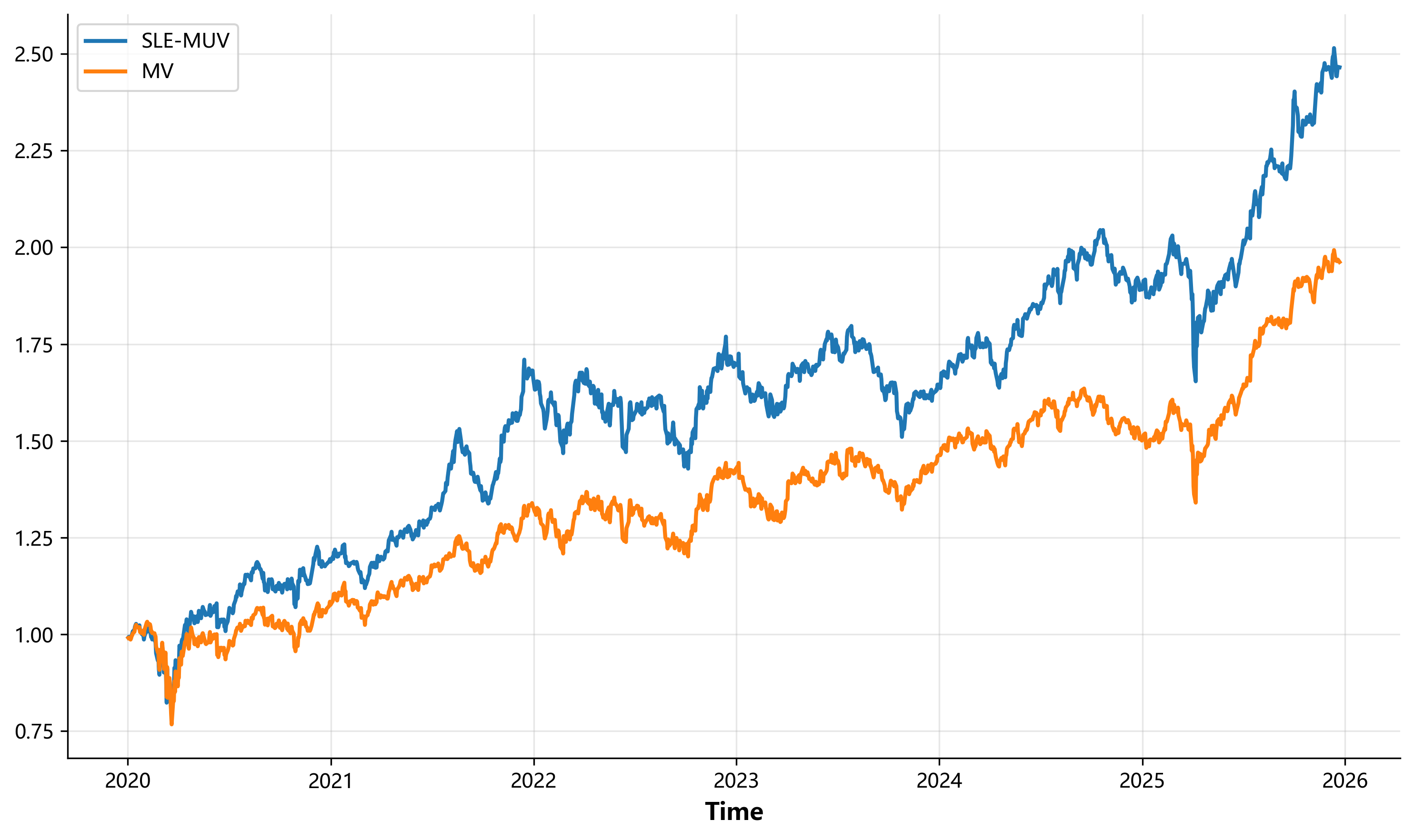}
  \caption{w=1.}\label{6company w=1}
  \end{subfigure}
  \caption{Protfolio Cumulative Wealth for SLE-MUV model and traditional MV Model under Different w Values.}\label{fig:6company data wealth fig}
\end{figure}

The difference in maximum drawdown between the SLE-MUV model and the traditional MV model lies within the interval $[-0.0338, 0.0302]$. The SLE-MUV model demonstrates favorable dynamic risk adjustment capability: the maximum drawdown narrows continuously from 29.09\% to 22.69\% as w increases, indicating a gradual optimization of risk levels. Table \ref{tab:6company-CW,MD,SR} reports the cumulative wealth, Sharpe ratio, and maximum drawdown of the SLE‑MV model and the traditional MV model under different risk factor weights w. Specifically, the values of the risk factor $w$ are selected to include two extreme cases (i.e., $w=0$ and $w=1$), as well as the value of $w$ that makes the cumulative wealth, Sharpe ratio, and maximum drawdown of the SLE-MUV model as close as possible to those of the LE model.
Figure \ref{US.6company-Accumulated Wealth Difference} and \ref{US-6company-SR} illustrate the cumulative wealth as of December 22, 2025 and Sharpe Ratio of the SLE-MUV model and the traditional MV model across different values of $w$ as of December 22, 2025, respectively. 

First, in terms of cumulative wealth  (see Figure \ref{US.6company-Accumulated Wealth Difference}), we can clearly observe that the cumulative wealth curve of the SLE-MUV model evolves from being slightly inferior to the traditional MV model in the initial phase (\( w=0 \)) to demonstrating significant excess return advantages in the later phase (\( w=1 \)). As of December 22, 2025, the cumulative wealth of the SLE-MUV model exhibits a fluctuating upward trend with the increase in the risk factor w, rising steadily from an initial value of 1.91 to 2.47. Within the intervals $w \in [0,0.09]$ and $w \in [0.15,0.17]$, the cumulative wealth of the SLE-MUV model is slightly lower than that of the traditional MV model (1.96). As w further increases, the cumulative wealth of the SLE-MUV model begins to surpass that of the traditional MV model, and this advantage continues to expand. At $w = 1$, its cumulative wealth is 0.50 higher than that of the traditional MV model.

Second, in terms of Sharpe ratio comparison (see Figure\ref{US-6company-SR}), the traditional MV model maintains a stable Sharpe ratio of 0.548. In contrast, the Sharpe ratio of the SLE-MUV model exhibits a continuous upward trend as w increases, rising steadily from an initial value of 0.490 to approximately 0.714 at $w=1$. Around $w=0.37$, the Sharpe ratio of the SLE-MUV model surpasses that of the traditional MV model for the first time; thereafter, as w further increases, the advantage of the SLE-MUV model in Sharpe ratio continues to expand. This indicates that at higher values of w, the SLE-MUV model can achieve superior risk-adjusted returns, generating higher excess returns per unit of risk.

Figure \ref{fig:6company data wealth fig} presents the cumulative wealth curve of the portfolio consisting of these six stocks over the backtesting period. We select the values of $w$ corresponding to the minimum, maximum, and near-zero differences in maximum drawdown between the SLE-MUV model and the traditional MV model, i.e., $w=0, 1, 0.48, 0.6$. As shown in Figure \ref{fig:6company data wealth fig}, during the sample period from January 1, 2020 to December 22, 2025, the cumulative wealth curves of the two models exhibit a highly similar overall fluctuation trend. However, as $w$ gradually increases, the advantage of the SLE-MUV model in cumulative wealth becomes increasingly evident, which is consistent with the findings presented in Figure \ref{US.6company-Accumulated Wealth Difference}. 

We calculate the average turnover rate and standard deviation of the SLE-MUV model and the traditional MV model when the risk factor $w=1$. The results show that the average turnover rate of the SLE-MUV model is 1.5410\% with a standard deviation of 0.0312, while the average turnover rate of the traditional MV model is 1.8204\% with a standard deviation of 0.0324. Compared with the traditional MV model, the proposed SLE-MUV model reduces the average turnover rate by 15.348\% relatively, which effectively reduces the loss of transaction costs. Meanwhile, the relative improvement in maximum drawdown of the SLE-MUV model reaches 11.7\%, showing a significant enhancement in tail risk control, and the relative increase in cumulative wealth is as high as 25.688\%. The above results illustrate that the SLE-MUV model not only features lower turnover and more robust trading behavior, but also achieves more considerable excess returns and terminal wealth accumulation while effectively controlling downside risks. It can provide investors with a flexible portfolio strategy with more controllable risks, superior returns, and stronger practical operability.

The continuous variation of the risk factor $w$  intuitively illustrates the evolutionary process of the SLE-MUV model's advantages: it is not a disruptive replacement for the traditional MV model, but rather, by introducing an additional dimension of risk adjustment (regulated by $w $), it achieves a smooth transition from ``being comparable to the benchmark'' to ``significantly outperforming the benchmark''. 
This progressive improvement in performance not only reflects the rationality of the model design but also provides a clear practical basis for dynamically adjusting $w$ according to risk preferences in actual investments.
Overall, investors can select an appropriate w based on their investment objectives and the current market environment, enabling the SLE-MUV model to achieve higher cumulative returns while effectively controlling maximum drawdown, thereby attaining a more favorable risk-return trade-off. This result fully validates the robustness and effectiveness of the model in practical investment applications.
\subsubsection{A-shares stocks}
In this section, we selected the return data of six A-share stocks from January 2, 2019 to December 25, 2025  for empirical analysis.
The sample includes: BYD, Industrial and Commercial Bank of China (ICBC), Kweichow Moutai, Hengrui Medicine, Sany Heavy Industry, and Wuliangye. 
It is observed that the cumulative wealth of the SLE-MUV model is consistently higher than that of the traditional MV model for any $w \in [0,1]$. Accordingly, investors can flexibly select the risk factor to maximize cumulative wealth (e.g., setting $w=1$). Furthermore, the proposed SLE-MUV model delivers a higher Sharpe ratio relative to the MV model. In terms of downside risk, the absolute difference in maximum drawdown between the two models is limited to 1.51\%, corresponding to a relative difference of 7.29\%.

Figure \ref{Ashares6company-CW SR} depicts the cumulative wealth curves of the SLE-MUV model and the MV model as of December 22, 2025, as well as the trend chart of the Sharpe ratio varying with the risk factor $w$, respectively.
For any risk weight w in the interval [0,1], the cumulative wealth of the SLE-MUV model is consistently higher than that of the traditional MV model, and reaches its maximum when $w=1$. The trend of the Sharpe ratio of the SLE-MUV model is similar to that of its cumulative wealth curve. When $w \in [0,0.04] \cup [0.08,0.09] \cup [0.18,1]$, the Sharpe ratio of the SLE-MUV model significantly surpasses that of the traditional MV model, indicating superior risk-adjusted return efficiency. 
Regarding tail risk, although the absolute maximum drawdown of the SLE-MUV model remains higher than that of the traditional MV model, and attains its maximum at $w=0.18$. As $w$ increases, the gap between the SLE-MUV model and the MV model gradually narrows, with a relative difference of 7.29\% at $w=1$.
This indicates that the risk controllability of the model is significantly improved over the long-term sample period.
Table \ref{tab:Acompany6shares} summarizes the cumulative wealth, Sharpe ratio and maximum drawdown of the SLE-MUV and conventional MV models. The selected values of the risk parameter w cover the two endpoints of the interval, i.e., $w=0,1$. In addition, four critical parameter values are considered: two that minimizes the Sharpe ratio difference between the two models, and others that yield the global maximum drawdown of the SLE-MV model across all feasible w values (w=0.07, 0.14, 0.17 and 0.18).

\begin{figure}[H]
  \centering
  \begin{subfigure}[htbp]{0.49\textwidth}
  \centering
  \includegraphics[width=1\textwidth]{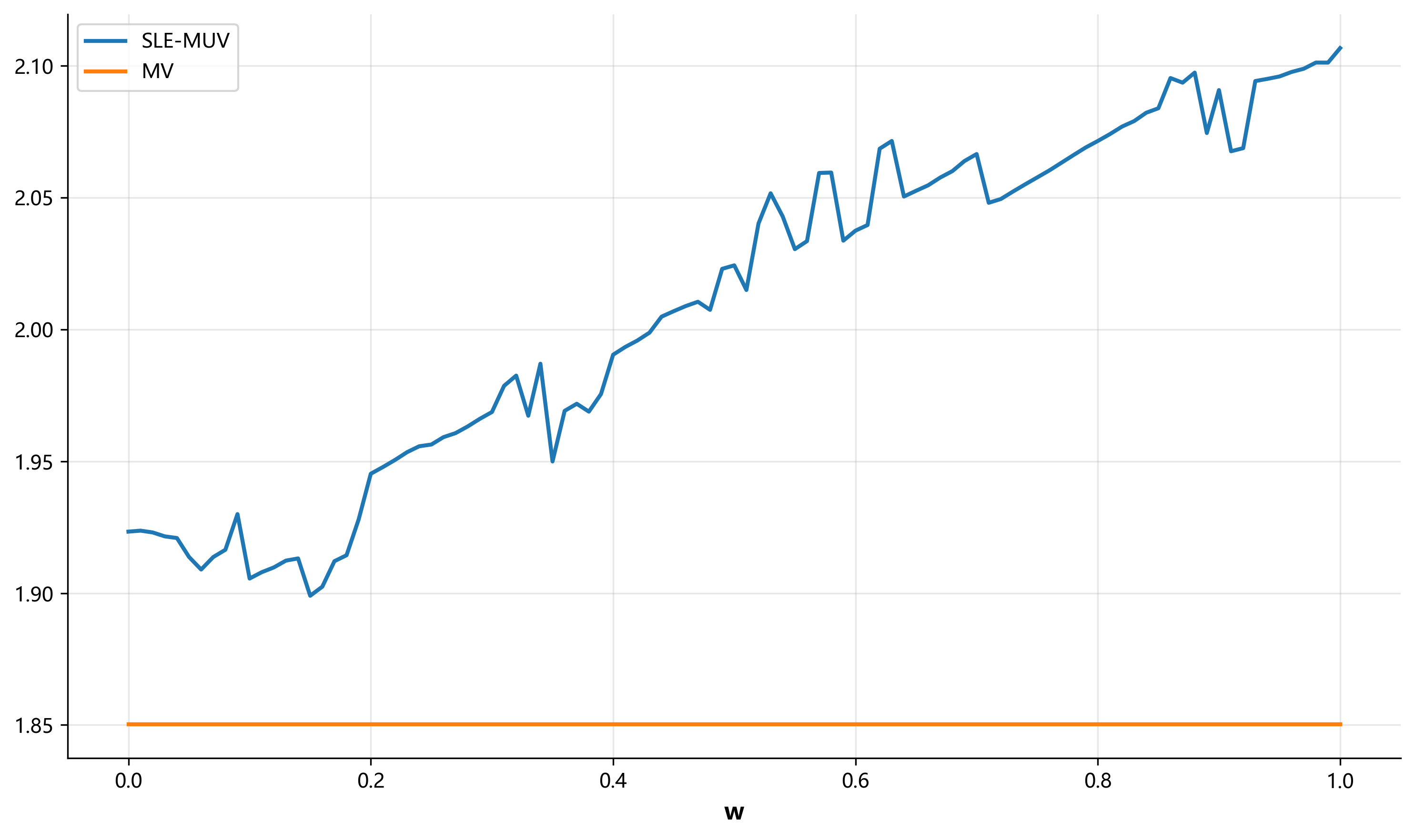}
  \caption{Cumulative Wealth at Dec 22, 2025}
\end{subfigure}
\begin{subfigure}[htbp]{0.49\textwidth}
  \centering
  \includegraphics[width=1\textwidth]{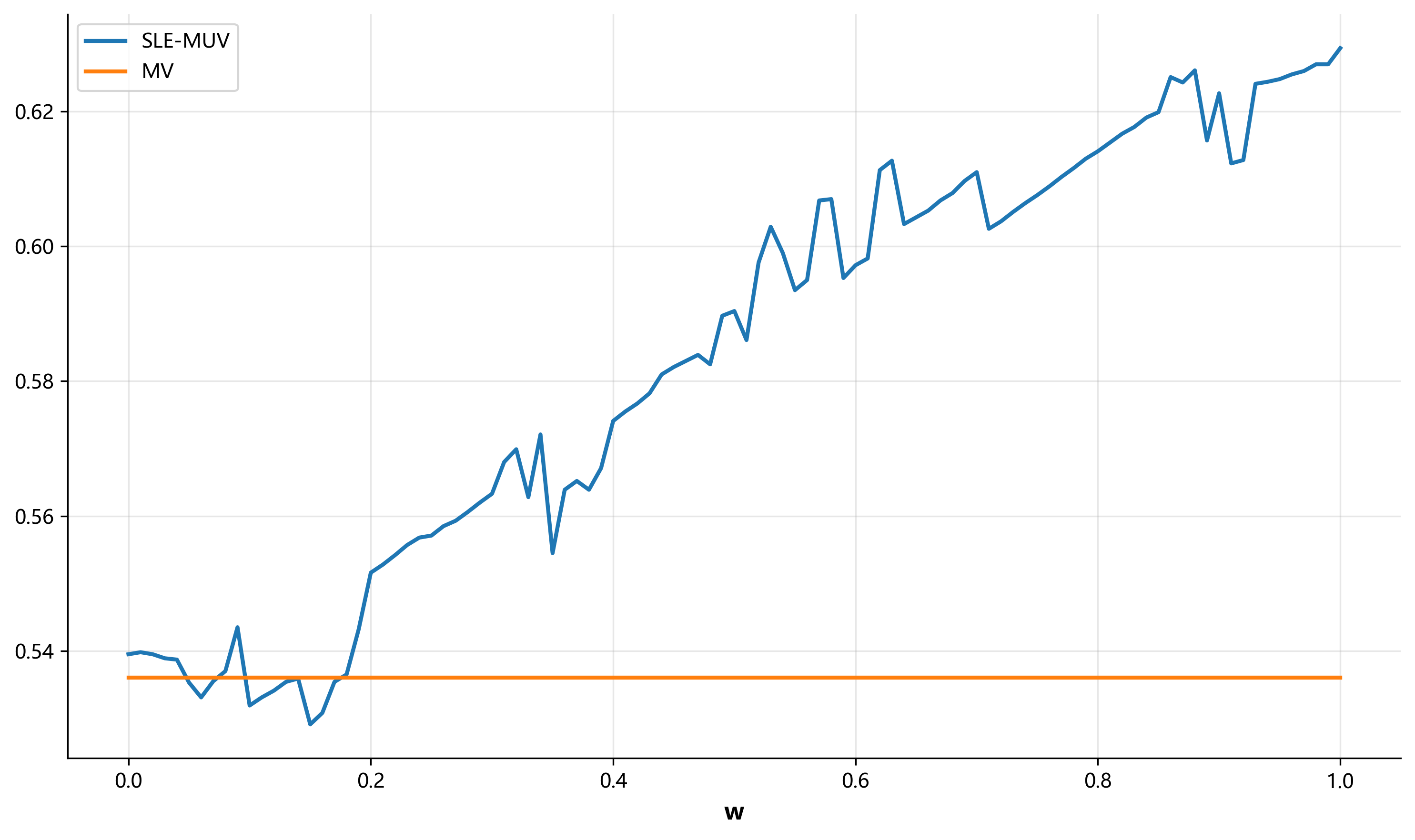}
  \caption{Sharpe Ratio}
  \end{subfigure}
\caption{Cumulative Wealth as of December 22, 2025, Sharpe Ratio: SLE-MUV Model vs. Traditional MV Model}
\label{Ashares6company-CW SR}
  \end{figure}

\begin{table}[H]
\caption{Cumulative Wealth (CW), Sharpe Ratio (SR), and Maximum Drawdown (MD) of the SLE-MUV Model and MV Model under Different Risk Factor w.}
\label{tab:Acompany6shares}
\begin{tabular*}{\hsize}{@{}@{\extracolsep{\fill}}clllccc@{}}
\hline
 & \multicolumn{3}{c}{MUV model under SLE}                                 & \multicolumn{3}{c}{Traditional MV model}                                     \\\hline
w    & CW    & SR    & MD     & \multicolumn{1}{l}{CW}  & \multicolumn{1}{l}{SR}  & \multicolumn{1}{l}{MD}   \\\hline
0.00                         & 1.923                         & 0.540                         & -0.249                         &                         &                         &                          \\
0.07 & 1.914 & 0.536 & -0.251 &                         &                         &                          \\
0.14                         & 1.913                         & 0.536                         & -0.252                         &                         &                         &                          \\
0.17                         & 1.912                         & 0.535                         & -0.253                         &                         &                         &                          \\
0.18                         & 1.914                         & 0.537                         & -0.253 &                         &                         &                          \\
1                            & 2.107                         & 0.629                         & -0.222                         & \multirow{-7}{*}{1.850} & \multirow{-7}{*}{0.536} & \multirow{-7}{*}{-0.207}\\ \hline
\end{tabular*}
\end{table}

\begin{figure}[H]
  \centering
  \begin{subfigure}[htbp]{0.49\textwidth}
  \includegraphics[width=1\textwidth]{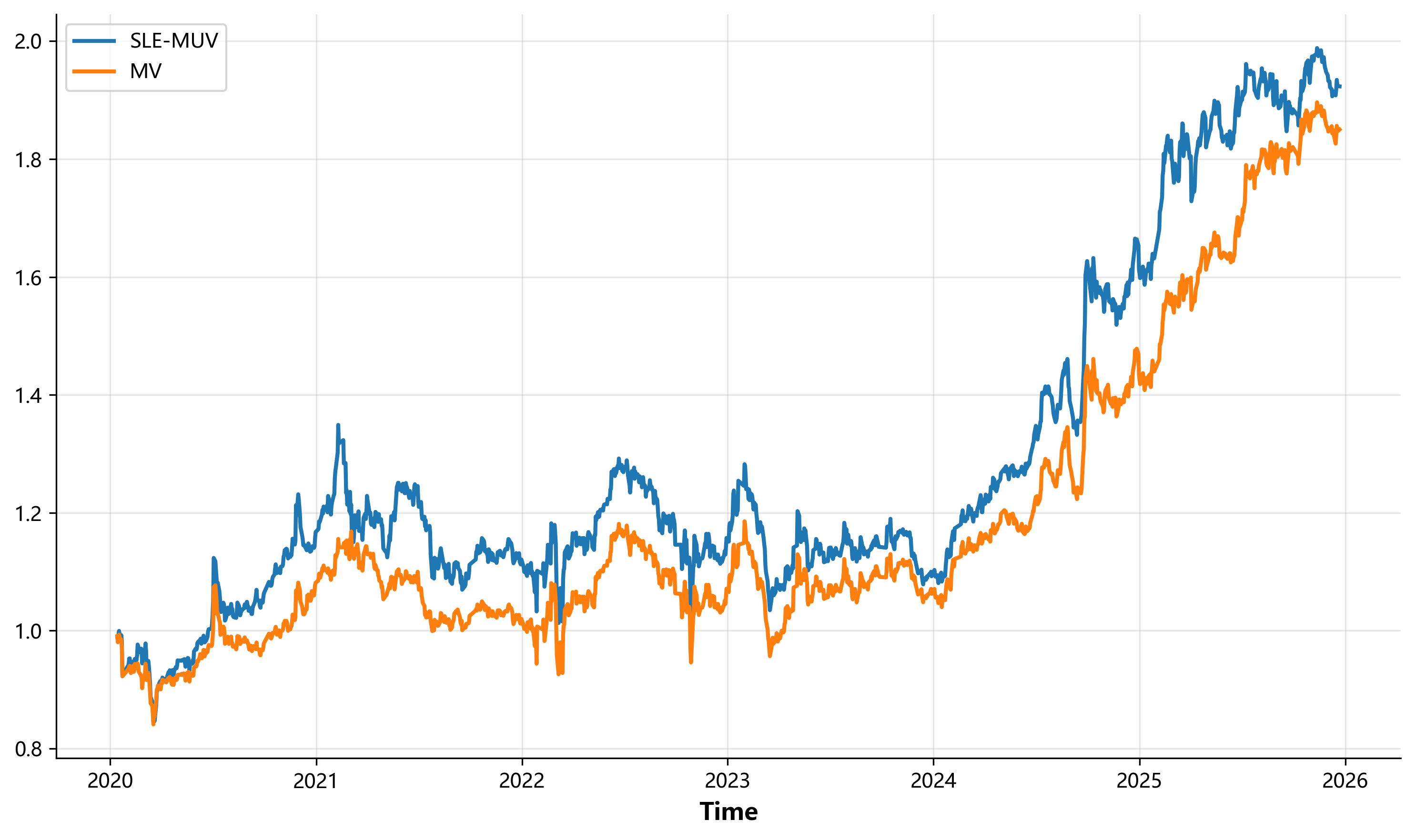}
  \caption{w=0.}\label{Ashares_6company w=0}
\end{subfigure}
\begin{subfigure}[htbp]{0.49\textwidth}
  \centering
  \includegraphics[width=1\textwidth]{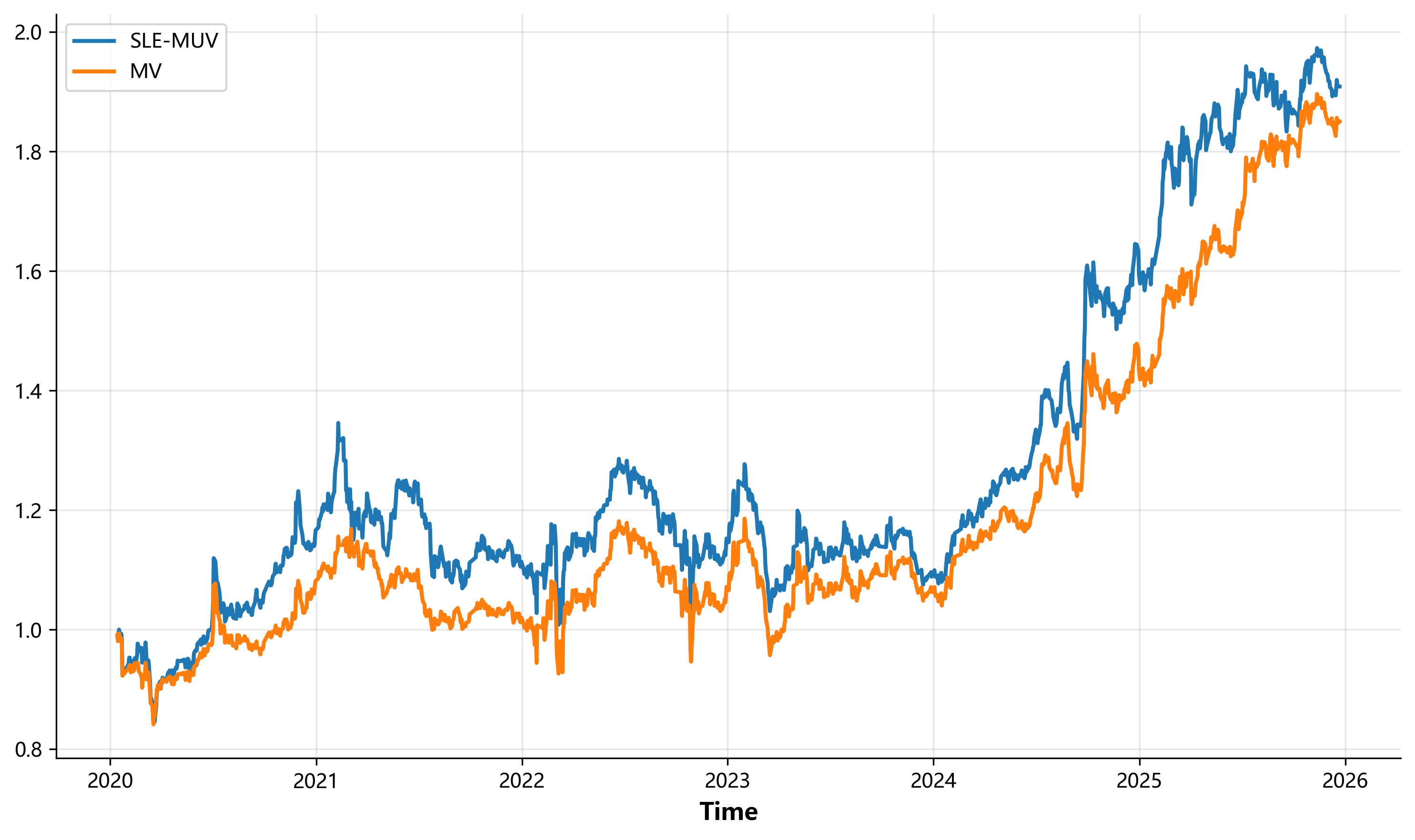}
  \caption{w=0.06.}\label{Ashares_6company w=0.06}
\end{subfigure}
\begin{subfigure}[htbp]{0.49\textwidth}
  \centering
  \includegraphics[width=1\textwidth]{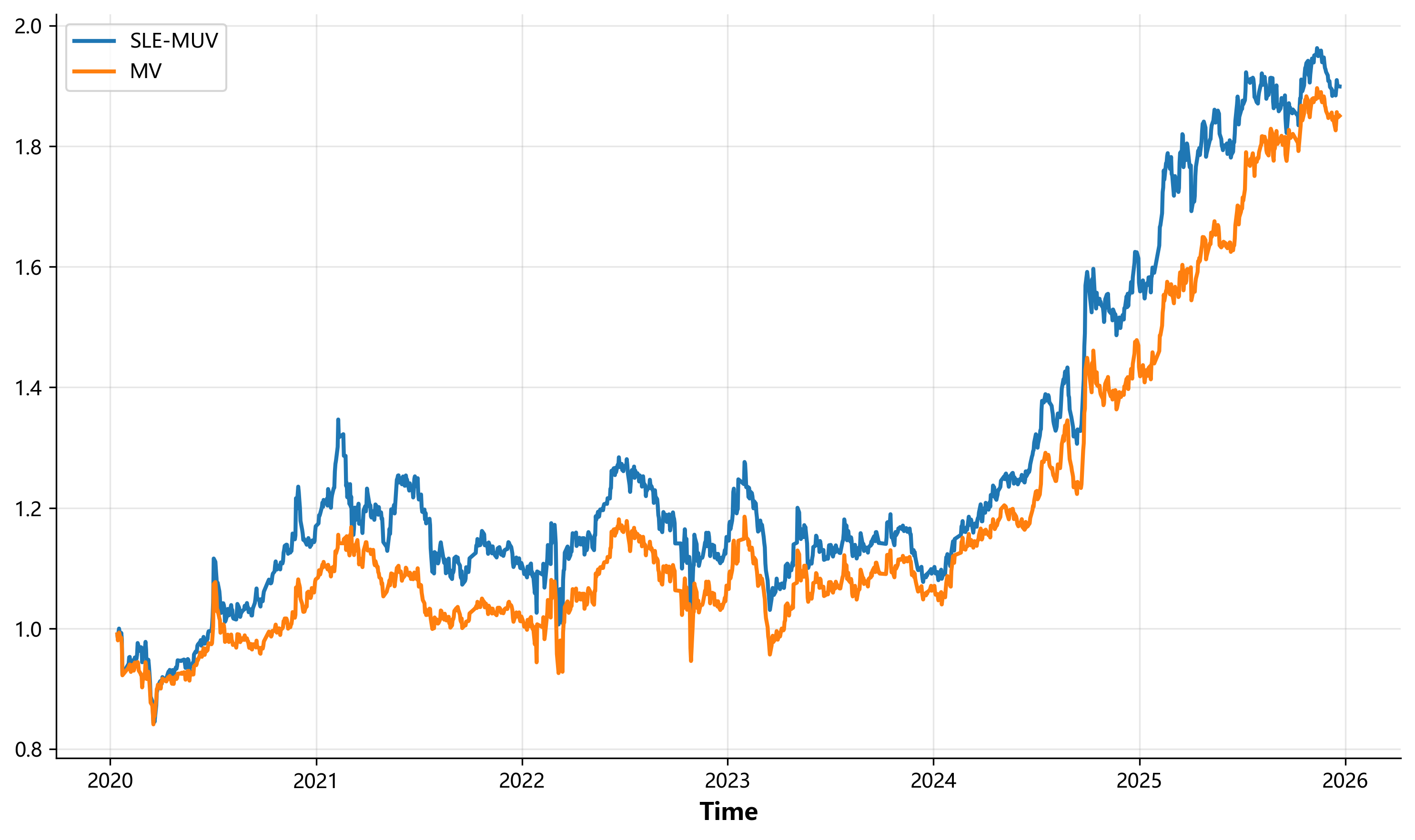}
  \caption{w=0.15.}\label{Ashares_6company w=0.15}
  \end{subfigure}
\begin{subfigure}[htbp]{0.49\textwidth}
  \centering
  \includegraphics[width=1\textwidth]{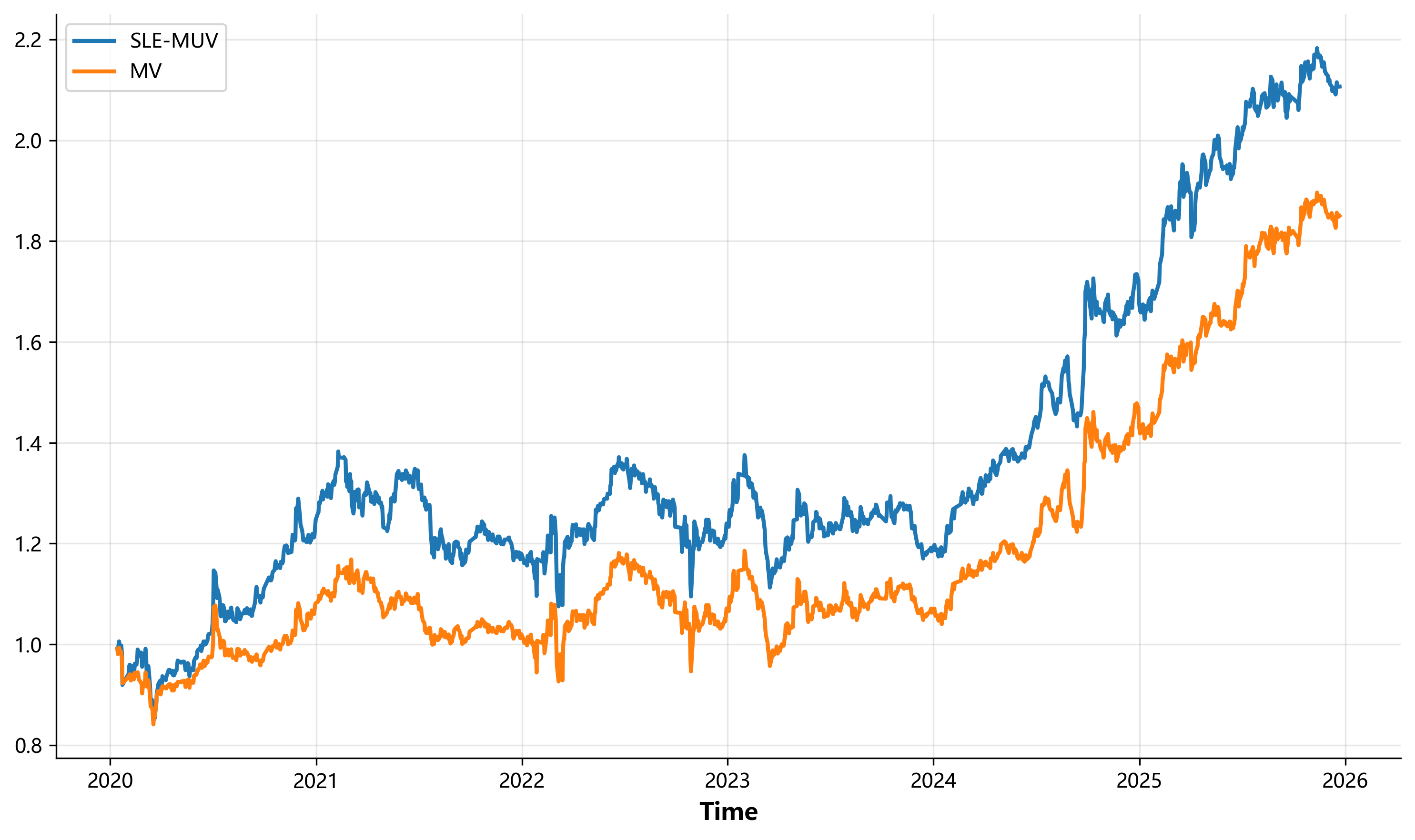}
  \caption{w=1.}\label{Ashares_6company w=1}
  \end{subfigure}
  \caption{Portfolio Cumulative Wealth for SLE-MUV model and traditional MV Model under under Different Values of w.}\label{fig:Ashares_6company data wealth fig}
\end{figure}

We further select two local minima where the Sharpe ratio of the SLE-MUV model falls below that of the traditional MV model ($w = 0.06, 0.15$), along with w=0 and w=1, for comparative analysis. The results align with the conclusions in Subsections \ref{sec: Synthetic Data} and \ref{sec:US stocks}: during the sample period from January 1, 2020, to December 22, 2025, the cumulative wealth curve of the SLE-MUV model shares a highly similar fluctuation pattern with the traditional MV model. However, by dynamically adjusting the parameter $w$, the SLE-MUV model enables higher long-term returns while maintaining manageable risk exposure.
Investors with strong risk tolerance, a long investment horizon and no pursuit of short-term market fluctuations can choose an appropriate risk factor to achieve their preferred investment allocation.

Experimental results from both synthetic data and real markets (U.S. stocks and China's A-shares) show that the performance advantage of the SLE-MUV model over the traditional MV model exhibits a non-monotonic pattern as the weight parameter $w$ varies from 0 to 1. This behavior arises because the optimal portfolio weights themselves vary non-monotonically with $w \in [0,1]$, as established in Subsection \ref{sec:The role of weights in portfolio allocation}.
Accordingly, at each rebalancing date, we can determine an optimal weight $w^* \in [0,1]$ that maximizes the portfolio return and minimizes the variance simultaneously. In this way, we obtain an optimal portfolio strategy that achieves the maximal cumulative wealth.

\section{Conclusion}\label{sec:conclusion}
This paper incorporates sublinear expectation theory into the classical mean-variance (MV) framework to derive the upper and lower variance bounds of a portfolio. We introduce a risk factor $w \in [0,1]$, which explicitly accounts for both the upper and lower variance bounds of the portfolio, to transform the multi-objective optimization problem into a tractable single-objective formulation, namely the SLE-MUV model. 
This framework enables more precise risk control in portfolio management practice.
We prove that the Pareto frontier of the proposed SLE-MUV model is a continuous convex curve, and derive the closed-form analytical expression of the optimal solution to the SLE-MUV model by the active set method.
Under SLE-MUV model, investors can select the appropriate value of the risk factor w to implement portfolio allocation according to their individual investment preferences.
The optimal value of the risk factor $w$ varies with the selection of individual stocks included in the portfolio.

To evaluate the performance of the proposed SLE-MUV model and benchmark it against the classical MV model, we conduct comprehensive empirical analyses using three distinct datasets: numerically simulated synthetic data, a sample of six representative U.S. stocks, and a sample of six representative  A-shares.
The empirical results show that investors can flexibly adjust the risk factor w to achieve personalized investment objectives, conditional on prevailing market conditions and their idiosyncratic risk preferences. These comprehensive empirical tests confirm the effectiveness, flexibility, and practical applicability of our proposed method.

For future research, we will extend the SLE-MUV model to the multi-period portfolio optimization setting, incorporate multi-period asset allocation and dynamic trading decisions, and optimize the overall performance of long-horizon portfolios based on the sublinear expectation framework.
\bibliography{references}
\section*{Appendix}
\subsection*{A. Sublinear expectation}
\label{subsec:Appendix_A}
Suppose that $\Omega$ is a given set, and let $\mathcal{H}$ denote a linear space of real-valued functions which are defined on $\Omega$. $C_{b,ip}(\mathbb{R}^n)$ denotes the space of bounded and Lipschitz continuous functions. $C_{l,Lip}(\mathbb{R}^n)$ represents the linear space of functions $\varphi$ satisfying the following local Lipschitz condition:
\begin{align*}
	|\varphi(x)-\varphi(y)| &\le C(1+ |x|^m + |y|^m)|x-y|,\  \text{for}\  x,y \in \mathbb{R}^n, \\
	&\text{where the constant $C>0$ and the integer $m \in \mathbb{N}$ depend on $\varphi$ }.
\end{align*}

\begin{definition}[\cite{Peng2010}]
	A functional $\mathbb{E}: \mathcal{H}\to \mathbb{R}$ satisfying
	\begin{enumerate}
		\item[(i)] $\mathbb{E}[X]\le \mathbb{E}[Y], \ \forall  X \le Y$.
		\item[(ii)] $\mathbb{E}[c] = c,\ \forall c\in \mathbb{R}$ .
		\item[(iii)] $\mathbb{E}[X+Y] \le \mathbb{E}[X]+\mathbb{E}[Y], \ \forall X, Y\in \mathcal{H}$.
		\item [(iv)] $\mathbb{E}[\lambda X] = \lambda \mathbb{E}[X]$,\  $\forall \lambda \ge 0$.
	\end{enumerate}
	
	Then $\mathbb{E}$ is called a \textbf{Sublinear Expectation}, and the triplet $(\Omega,\mathcal{H},\mathbb{E})$ is called a \textbf{Sublinear Expectation Space}.
\end{definition}
\begin{definition}[\cite{Peng2010}]
	Let $X_1$ and $X_2$ be two n-dimensional random vectors defined on nonlinear expectation spaces $(\Omega_1,\mathcal{H}_1,\mathbb{E}_1)$ and $(\Omega_2,\mathcal{H}_2,\mathbb{E}_2)$, respectively. They are called \textbf{identically distributed}, denoted by $X_1 \overset{d}{=} X_2$, if
	\begin{align*}
		\mathbb{E}_1[\varphi(X_1)] = \mathbb{E}_2[\varphi(X_2)], \ \forall \varphi \in C_{b,lip}(\mathbb{R}^n).
	\end{align*}
\end{definition}
\begin{definition}[\cite{Peng2010}]
	In a nonlinear expectation space $(\Omega,\mathcal{H},\mathbb{E})$, a random vector $Y \in \mathcal{H}^d$ is said to be \textbf{independent} of another random vector $X \in \mathcal{H}^m$ under $\mathbb{E}$ if for each test function $\varphi \in C_{b,lip}(\mathbb{R}^{m+n})$, we have
	\begin{align*}
		\mathbb{E}[\varphi(X,Y)] = \mathbb{E}[\mathbb{E}[\varphi(x,Y)]_{x=X}].
	\end{align*}
\end{definition}
\begin{definition}[\cite{Peng2010}]
	A $d$-dimensional random vector $\eta = (\eta_1,\eta_2,...,\eta_d)$ on a sublinear expectation space $(\Omega,\mathcal{H},\mathbb{E})$ is called \textbf{maximally distributed} if there exists a bounded, closed and convex subset $\Gamma \subset \mathbb{R}^d$ such that
	\begin{align*}
		\mathbb{E}[\varphi(\eta)] = \max_{y \in \Gamma}\varphi(y), \ \varphi \in C_{l,Lip}(\mathbb{R}^d) .
	\end{align*}
\end{definition}
\begin{definition}[\cite{Peng2010}]
  d-dimensional random vector $X = (X_1,X_2,...,X_d)$ on a sublinear expectation space $(\Omega,\mathcal{H},\mathbb{E})$ is called G-normally distributed if
  \begin{align}
    aX+b\overline{X} = \sqrt{a^2+b^2}X \ for \ a,b\ge 0,
  \end{align}
  where $\overline{X}$ is an independent copy of X. 
\end{definition}
\begin{theorem}[\cite{Peng2010}]\label{th:law of large numbers}
	Let $\{Y_i\}_{i=1}^{\infty}$ be a sequence of $\mathbb{R}^d$-valued random variables on  $(\Omega,\mathcal{H},\mathbb{E})$. We assume that $Y_{i+1} \overset{d}{=} Y_{i}$ and  $Y_{i+1}$ is independent from $\{Y_1,...,Y_i\}$ for each $i = 1,2,...$.We assume further the following uniform integrability condition:
	\begin{align*}
		\lim_{\lambda \to +\infty}\mathbb{E}[(|Y_1|-\lambda)^+] = 0.
	\end{align*}
	Then the sequence $\{ \frac{1}{n}(Y_1+Y_2+...+Y_n)\}_{n=1}^{\infty}$ converges in law to a maximal distribution, i.e.
	\begin{align*}
		\lim_{n\to \infty}\mathbb{E}\left [\varphi \left (\frac{1}{n}(Y_1+Y_2+...+Y_n)\right)\right ] = \max_{\theta \in \overline{\Theta}} \varphi(\theta),
	\end{align*}
	for all functions $\varphi \in C(\mathbb{R}^d)$ satisfying linear growth condition, i.e., $|\varphi(x)|\le C(1+|x|)$, where $\overline{\Theta}$ is the (unique) bounded, closed and convex subset of $\mathbb{R}^d$ satisfying
	\begin{align*}
		\max_{\theta \in \overline{\Theta}}\left \langle p,\theta \right \rangle = \mathbb{E}[\left \langle p,Y_1 \right \rangle], p \in \mathbb{R}^d.
	\end{align*}
\end{theorem}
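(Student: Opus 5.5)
We would follow Peng's PDE/telescoping argument. First reduce to smooth bounded test functions: by the uniform integrability condition and the linear growth of $\varphi$, it suffices to treat $\varphi\in C_{b,Lip}(\mathbb{R}^d)$, and then, by mollification, $\varphi\in C_b^\infty$ with bounded derivatives. Next set $g(p)=\mathbb{E}[\langle p,Y_1\rangle]$. This $g$ is sublinear and finite (by integrability), so by the Hahn--Banach/support-function representation there is a unique bounded, closed, convex $\overline{\Theta}$ with $g(p)=\max_{\theta\in\overline\Theta}\langle p,\theta\rangle$; this gives the last display of Theorem \ref{th:law of large numbers}. Define
\begin{equation*}
u(t,x)=\max_{\theta\in\overline\Theta}\varphi(x+t\theta),\qquad t\ge 0,\ x\in\mathbb{R}^d .
\end{equation*}
This $u$ is the viscosity solution of the first-order equation $\partial_t u=g(D_x u)$ with $u(0,\cdot)=\varphi$. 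The claim is then that $\mathbb{E}[\varphi(\bar S_n)]\to u(1,0)$, where $\bar S_n=\frac1n\sum_{i\le n}Y_i$.

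For the core estimate we telescope along the sums. Put $S_k=\sum_{i\le k}Y_i$ and write
\begin{equation*}
\mathbb{E}[\varphi(\bar S_n)]-u(1,0)=\sum_{k=0}^{n-1}\Big(\mathbb{E}\big[u(1-\tfrac{k+1}{n},\tfrac{S_{k+1}}{n})\big]-\mathbb{E}\big[u(1-\tfrac{k}{n},\tfrac{S_k}{n})\big]\Big).
\end{equation*}
Since $Y_{k+1}$ is independent of $(Y_1,\dots,Y_k)$ and $Y_{k+1}\overset{d}{=}Y_1$, each term can be handled by freezing $x=S_k/n$. We Taylor-expand $u(t-\frac1n,x+\frac{Y_1}{n})-u(t,x)$ to first order, which gives $\frac1n\big(-\partial_t u+\langle D_xu,Y_1\rangle\big)$ plus a remainder. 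Taking $\mathbb{E}$, the main part becomes $\frac1n\big(-\partial_tu+g(D_xu)\big)=0$. The remainder is $o(1/n)$ uniformly in $x$, using bounded second derivatives together with $\mathbb{E}[|Y_1|\min(1,|Y_1|/n)]\to0$, which follows from uniform integrability. Sub-additivity of $\mathbb{E}$ gives both $\mathbb{E}[X+Y]-\mathbb{E}[Y]\le\mathbb{E}[X]$ and the reverse inequality with $-\mathbb{E}[-X]$, so the sum of the $n$ errors is $n\cdot o(1/n)\to0$ in both directions.

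The main obstacle is that $u$ is not $C^{1,2}$: a max over $\theta$ of smooth functions is only Lipschitz and semiconvex, so the pointwise Taylor expansion is unjustified. We would first regularize. Either mollify $u$ in $(t,x)$, using that $g$ is convex so that $u^\varepsilon=u*\rho_\varepsilon$ is a smooth supersolution with $|u^\varepsilon-u|\le C\varepsilon$, or perturb the equation to $\partial_tu=g(Du)+\varepsilon\Delta u$, whose solution is classical. Running the telescoping on the smooth approximation gives an error of order $C(\varepsilon)o(1)+C\varepsilon$. Letting $n\to\infty$ and then $\varepsilon\to0$ concludes, after checking that the second-derivative bounds depend only on $\varepsilon$ and $\varphi$. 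A final uniform-integrability truncation extends the result from $C_{b,Lip}$ to continuous $\varphi$ of linear growth.
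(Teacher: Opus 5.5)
The paper gives no proof of this theorem; it only quotes it from \cite{Peng2010}. Measured against Peng's argument, your skeleton is sound: the support-function construction of $\overline{\Theta}$, the telescoping of $\mathbb{E}[u(1-\frac kn,\frac{S_k}{n})]$, freezing $x=S_k/n$ via independence, controlling the remainder by $\mathbb{E}[|Y_1|\min(1,|Y_1|/n)]\to 0$, and the final uniform-integrability truncation. The gap is in the regularization step, which as written delivers only one of the two inequalities.

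For $u^\varepsilon=u*\rho_\varepsilon$, Jensen and convexity of $g$ give $\partial_t u^\varepsilon\ge g(Du^\varepsilon)$. So the main term $\frac1n\bigl(-\partial_t u^\varepsilon+g(Du^\varepsilon)\bigr)$ of each step is $\le 0$, not $=0$, and summing yields only $\limsup_n\mathbb{E}[\varphi(\bar S_n)]\le\max_{\overline\Theta}\varphi$. Convolution gives no subsolution in general, because Jensen only goes one way. Sublinearity also blocks a symmetry argument: applying the upper bound to $-\varphi$ gives only $\liminf_n\mathbb{E}[\varphi(\bar S_n)]\ge-\limsup_n\mathbb{E}[-\varphi(\bar S_n)]\ge\min_{\overline\Theta}\varphi$, not the $\max$ you need.

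Your viscous alternative has the same defect. A step $x\mapsto x+Y_1/n$ produces $\frac1n g(D\phi)$ and no $\Delta\phi$ term, so each step leaves the residual $-\frac{\varepsilon}{n}\Delta u^\varepsilon$. This residual does not vanish, because $\varepsilon\|D^2u^\varepsilon\|_\infty$ stays of order one at kinks of $u$. For instance, with $d=1$ and $g(p)=|p|$, the function $t+w(x)$ with $w'(x)=\operatorname{sgn}(x)(1-e^{-|x|/\varepsilon})$ solves $\partial_t u=|u_x|+\varepsilon u_{xx}$ and converges to the inviscid solution $t+|x|$, yet $\varepsilon w''(0)=1$. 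Semiconvexity of $u^\varepsilon$ bounds $-\varepsilon\Delta u^\varepsilon$ from above only, so again you get just the upper bound. Your claimed two-sided error $C(\varepsilon)o(1)+C\varepsilon$ is therefore not justified for either route.

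The missing half is easy to supply. For fixed $\theta_0\in\overline\Theta$, the function $v(t,x)=\varphi(x+t\theta_0)$ is smooth and satisfies $\partial_t v=\langle\theta_0,Dv\rangle\le g(Dv)$. It is thus a classical subsolution with $v(0,\cdot)=\varphi$ and $v(1,0)=\varphi(\theta_0)$. Running your telescoping with $v$, together with $\mathbb{E}[X+R]\ge\mathbb{E}[X]-\mathbb{E}[|R|]$, gives $\liminf_n\mathbb{E}[\varphi(\bar S_n)]\ge\varphi(\theta_0)$. Maximizing over $\theta_0$ matches your supersolution bound.

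With this addition your mollification route is complete, and it is more elementary than Peng's. Peng obtains the LLN as the degenerate case of his CLT-with-LLN and regularizes the random variables rather than only the equation:
\begin{itemize}
\item he adds independent classical Gaussians $\sqrt{2\varepsilon/n}\,\kappa_i$ to the increments, so the generator becomes exactly $g(p)+\varepsilon\,\mathrm{tr}A$;
\item he invokes Krylov's $C^{1+\alpha/2,2+\alpha}$ estimates for this uniformly parabolic equation, so the main term vanishes identically;
\item he removes the perturbation at cost $O(\sqrt{\varepsilon})$, since $\sqrt{2\varepsilon/n}\sum_{i\le n}\kappa_i\sim N(0,2\varepsilon I)$.
\end{itemize}
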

\subsection*{B. Pareto frontier}
For a Multi-objective Optimization Problem (MOP) 
\begin{align*}
  (\text{MOP})\quad  \underset{x\in X}\min f(x) = \left(f_1(x), f_2(x), ..., f_p(x)\right)^{\top}
\end{align*}
where
\begin{align*}
  X = \left\{ x \in \mathbb{R}^n | g(x) = \left( g_1(x), g_2(x),...,g_m(x) \right)^{\top} \le 0, h(x) = \left( h_1(x),h_2(x),...,h_{l}(x) \right)^{\top} = 0 \right\}.
\end{align*}
\begin{definition}
    Let $x^*\in X$. If for any $ x \in X $, $ f(x^*) \le f(x) $ holds, i.e., $ f_i(x^*) \leq f_i(x) $ for all $ i = 1, 2, \dots, p $, then $ x^* $ is said to be an \textbf{absolute optimal solution} of MOP.
\end{definition}
\begin{definition}
  Let $ x^* \in X $. If there does not exist $ x \in X $ such that $ f(x) \leq f(x^*) $ (or $ f(x) < f(x^*) $), then $ x^* $ is said to be an \textbf{Efficient solution} (or \textbf{Weakly Efficient solution}) of MOP.
\end{definition}
  All efficient solutions, which are also known as Pareto optimal solutions, form a set called the Pareto optimal set. The corresponding objective vectors are considered to lie on the Pareto frontier.

\end{document}